\let\accentvec\vec %problems with amsmath
\let\vec\accentvec %problems with amsmath
\def\doShort
\let\Long\False \let\Short\True \let\Medium\False \let\Fixpoint\False}
\def\doMedium
\let\Long\False \let \Short\False \let\Medium\True \let\Fixpoint\True}
\def\doLong
\let \Long\True  \let\Short\False \let\Medium\False \let\Fixpoint\True}
\newcommand{\keywords}[1]{\par\addvspace\baselineskip
\noindent\keywordname\enspace\ignorespaces#1}
\let \Rule \DefTirName
\newenvironment{mathparfig}[3][t]
   {\begin{figure}[#1]\abovedisplayskip 0em\belowdisplayskip 0em
    \def \CAPTION{\label{#2}#3}
     \begin{mathpar}}
   {\end{mathpar}\caption{\CAPTION}\end{figure}}
\newcommand {\uad}{\hskip 1ex}
\newcommand{\noabovedisplayskip}{\abovedisplayskip 0em}
\newcommand {\nobelowdisplayskip}{\belowdisplayskip 0em}
\let \Rule \DefTirName
\newcommand{\code}[1]{\texttt{#1}}
\newcommand{\tyc}[1]{\mathtt{#1}}
\newcommand{\lam}[1]{\lambda #1 .}
\newcommand{\tfun}[3]{{[#1](#2)} \to #3}
\newcommand{\fix}[2]{\ensuremath{\texttt{fix} \, #1 \, #2 .}}
\newcommand{\type}[1]{\ensuremath{\mathit{#1}}}
\newcommand{\app}[0]{~}  %term application
\newcommand{\der}{\vdash}
\newcommand{\cder}{\stackrel{c}{\der}}
\newcommand{\iI}{{i\in I}}
\newcommand{\jJ}{{j \in J}}
\newcommand{\of}{\mathpunct{:}\!}
\newcommand{\clet}[3]{\mathop{\code{let}} #1 = #2 \mathbin{\code{in}} #3}
\newcommand{\Rewto}[1]{\stackrel{#1}{\uad\leadsto\uad}}
\newcommand{\rewto}[2]{\Rewto{#1\backslash{}#2}}
\newcommand{\redto}{\implies}
\newcommand{\credto}{\stackrel{c}{\implies}}
\newcommand{\push}[1]{, #1}
\newenvironment{caselist}{%
  \begin{list}{{\it Case}}{}%
}{\end{list}%
}
\newcommand{\nextcase}{\item~}
\newcommand{\Gammap}{{\Gamma^\prime}}
\newcommand{\Phip}{{\Phi^\prime}}
\newcommand{\Phicode}[1]{{{\Phi}_{\code{#1}}}}
\newcommand{\Phidef}{\Phicode{def}}
\newcommand{\Phibody}{\Phicode{body}}
\newcommand{\Phifun}{\Phicode{fun}}
\newcommand{\Phiarg}{\Phicode{arg}}
\newcommand{\Phiclos}{\Phicode{clos}}
\newcommand{\Phipcode}[1]{{{\Phi'}_{\code{#1}}}}
\newcommand{\Phipclos}{\Phipcode{clos}}
\newcommand{\fvenv}[1]{[#1]}
\DeclareMathOperator\dom{dom}
\begin{document}

\mainmatter  % start of an individual contribution

\title{Tracking Data-Flow with Open Closure Types}

\author{Gabriel Scherer\inst{1}% gabriel.scherer@inria.fr
        \and Jan Hoffmann\inst{2} % jan.hoffmann@yale.edu
       }

\institute{INRIA Paris-Rocquencourt
           \and Yale University}

\maketitle

\begin{abstract}
  Type systems hide data that is captured by function closures in
  function types.  In most cases this is a beneficial design that
  enables simplicity and compositionality.  However, some applications
  require explicit information about the data that is captured in
  closures.

  This paper introduces open closure types, that is, function
  types that are decorated with type contexts.  They are used to track
  data-flow from the environment into the function closure.  
  A simply-typed lambda calculus is used to study the properties of
  the type theory of open closure types.  A distinctive feature
  of this type theory is that an open closure type of a function
  can vary in different type contexts.
  To present an application of the type theory, it is shown that a
  type derivation establishes a simple non-interference property in
  the sense of information-flow theory.
  A publicly available prototype implementation of the system can be
  used to experiment with type derivations for example programs.
\keywords{Type Systems, Closure Types, Information Flow}
\end{abstract}

\begin{version}{\Long}
  % begin include from "How to get a proper table of contents with
  % Springer's llncs class"
  % http://code.google.com/p/decidr/wiki/LatexLLNCSTableOfContents
  \setcounter{tocdepth}{2}
  \makeatletter
  \makeatletter
  % end include

  \tableofcontents
\end{version}

\section{Introduction}
\label{sec:intro}
 % \only{\Short}{\vspace{-1ex}}
Function types in traditional type systems only provide information
about the arguments and return values of the functions but not about
the data that is captured in function closures. Such function
types naturally lead to simple and compositional type systems.

Recently, syntax-directed type systems have been increasingly used to
statically verify strong program properties such as resource
usage~\cite{LagoP13,Jost10,HoffmannAH12}, information
flow~\cite{HeintzeR98,SabelfeldM03}, and
termination~\cite{Abel07,Chin01,BartheGR08}.
In such type systems, it is sometimes necessary and natural to include
information in the function types about the data that is captured by
closures.  To see why, assume that we want to design a type system to
verify resource usage.  Now consider for example the curried append
function for integer lists which has the following type in OCaml.
\[  \abovedisplayskip 0.6em
\code{append} : \type{int\ list} \to \type{int\ list} \to \type{int\ list}
\belowdisplayskip 0.6em \]
At first glance, we might say that the time complexity of
\code{append} is $O(n)$ if $n$ is the length of the first argument.
But a closer inspection of the definition of \code{append} reveals
that this is a gross simplification.  In fact, the complexity of the
partial function call \code{app\_par = append $\ell$} is constant.
Moreover, the complexity of the function \code{app\_par} is
linear---not in the length of the argument but in the length of the
list $\ell$ that is captured in the function closure.  

In general, we have to describe the resource consumption of a curried
function $f : A_1 \to \cdots \to A_n \to A$ with $n$ expressions
$c_i(a_1,\ldots,a_i)$ such that $c_i$ describes the complexity of the
computation that takes place after $f$ is applied to $i$ arguments
$a_1,\ldots,a_i$.  We are not aware of any existing type system that can
verify a statement of this form.

To express the aforementioned statement in a type system, we have to
decorate the function types with additional information about the data
that is captured in a function closure.  It is however not sufficient to
directly describe the complexity of a closure in terms of its arguments
and the data captured in the closure.  Admittedly, this would work to
accurately describe the resource usage in our example function
\code{append} because the first argument is directly captured in the
closure.  But in general, the data captured in a closure $f a_1 \cdots
a_i$ can be any data that is computed from the arguments
$a_1,\ldots,a_i$ (and from the data in the environment).  To reference
this data in the types would not only be meaningless for a user, it
would also hamper the compositionality of the type system.  It is for
instance unclear how to define subtyping for closures that capture
different data (which is, e.g., needed in the two branches of a
conditional.)

To preserve the compositionality of traditional type systems, we
propose to describe the resource usage of a closure as a function of
its argument and the data that is visible in the current environment.
To this end we introduce \emph{open closure types}, function types
that refer to their arguments and to the data in the current
environment.

More formally, consider a typing judgment of the form $\Gamma \der e :
\sigma$, in a type system that tracks fine-grained intensional
properties characterizing not only the shape of values, but the
behavior of the reduction of $e$ into a value (e.g., resource
usage). A typing rule for open closure types, $\Gamma,\Delta \der e :
\tfun{\Gamma'}{x \of \sigma}{\tau}$, captures the idea that, under a
weak reduction semantics, the computation of the closure itself, and
later the computation of the closure \emph{application}, will have
very different behaviors, captured by two different typing
environments $\Gamma$ and $\Gamma'$ of the same domain, the free
variables of $e$.  To describe the complexity of \code{append}, we
might for instance have a statement 
\[ \abovedisplayskip 0.5em
\ell \of \type{int\ list} \der \code{append}\ \ell :
\tfun{\ell \of \type{int\ list}}{y \of \type{int\ list}}{int\ list} \; .
\belowdisplayskip 0.5em
\]
This puts us in a position to use type annotations to describe the
resource usage of $\code{append}\ \ell$ as a function of $\ell$ and
the future argument $y$.  For example, using type-based amortized
analysis~\cite{HoffmannAH12}, we can express a bound on the number of
created list notes in \code{append} with the following open closure type.
\[ \abovedisplayskip 0.5em
\code{append} :
\tfun{}{ x \of \type{int\ list}^0}
{\tfun{x \of \type{int\ list}^1}{y \of \type{int\ list}^0}{int\ list}^0} \; .
\belowdisplayskip 0.5em \]
The intuitive meaning of this type for \code{append} is as follows.
To pay for the cons operations in the evaluation of $\code{append}\,
\ell_1$ we need $0 {\cdot} |\ell_1|$ resource units and to pay for the
cons operations in the evaluation of $\code{append}\, \ell_1 \,
\ell_2$ we need $0 {\cdot} |\ell_1| + 1 {\cdot}|\ell_2|$ resource
units.

The development of a type system for open closure types entails some
interesting technical challenges: term variables now appear in types,
which requires mechanisms for scope management not unlike dependent type
theories. If $x$ appears in $\sigma$, the context
$\Gamma,x\of\tau,y\of\sigma$ is not exchangeable with
$\Gamma,y\of\sigma,x\of\tau$. Similarly, the judgment $\Gamma, x \of
\tau \der e_2 : \sigma$ will not entail $\Gamma \der
\clet{x}{e_1}{e_2} : \sigma$, as the return type $\sigma$ may contain
open closures scoping over $x$, so we need to substitute variables in
types.

The main contribution of this paper is a type theory of open closure
types and the proof of its main properties.  We start from the simply-typed
lambda calculus, and consider the simple intensional property of
data-flow tracking, annotating each simply-typed lambda-calculus type
with a single boolean variable. This allows us to study the metatheory of
open closure types in clean and straightforward way.  This is the first
important step for using such types in more sophisticated type systems for 
resource usage and termination.

Our type system for data-flow tracking captures higher-order data-flow
information. As a byproduct, we get our secondary contribution, a
non-interference property in the sense of information flow theory:
high-level inputs do not influence the (low-level) results of
computations.

To experiment with of our type system, we implemented a software
prototype in OCaml (see Section~\ref{sec:impl}).
\begin{version}{\Or\Short\Medium}
  A full version of this article, containing the
  full proofs and additional details and discussion, is available
  online.\footnote{\url{http://hal.inria.fr/INRIA-RRRT/hal-00851658}}
\end{version}

 % \only{\Short}{\vspace{-1ex}}
\subsubsection{Related Work.} 

In our type system we maintain the invariant that open closure types
only refer to variables that are present in the current typing
context.  This is a feature that distinguishes open closure types from
existing formalisms for closure types.  

For example, while our function type
$\tfun{\Gamma^\Phi}{x \of \sigma_1}{\sigma_2}$ superficially resembles
a contextual arrow type $[\Psi](\sigma_1 \to \sigma_2)$ of contextual
type theory\cite{NanevskiPP08,PientkaD08,StampoulisS12}, we are not
aware of any actual connection in application or metatheory with these
systems. In particular, the variable in our captured context
$\Gamma^\Phi$ are \emph{bound occurrences} of the ambient typing
context, while the context $\Psi$ of a contextual type $[\Psi]T$
\emph{binds} metavariables to be used to construct inhabitants. As
such a binding can make sense in any context, our substitution
judgment has no counterpart in contextual type theory, or other modal
type theories for multi-stage programming
(\cite{Moggi99,Tsukada10}).

Having closure types carry a set of captured variables has been done in
the literature, as for example in Leroy~\cite{Leroy92}, which use
closure types to keep track of of \emph{dangerous type variables} that
can not be generalized without breaking type safety, or in the
higher-order lifetime analysis of Hannan et al.~\cite{Hannan97},
where variable sets denote variables that must be kept in memory.
However, these works have no need to vary function types in different
typing contexts and subtyping can be defined using set inclusion,
which makes the metatheory significantly simpler. On the contrary, our
scoping mechanism allows to study more complex properties, such as
value dependencies and non-interference.

The classical way to understand value capture in closures in a typed
way is through the \emph{typed closure conversion} of Minamide et
al.~\cite{MinamideMH96}.  They use existential types to account for
hidden data in function closures without losing compositionality, by
abstracting over the difference between functions capturing from
different environments. Our system retains this compositionality,
albeit in a less apparent way: we get finer-grained information about
the dependency of a closure on the ambient typing environment. % This
% information is substituted when moving to a smaller, more general
% environment, so that functions of the same input and output types can
% always be mixed together in a given environment.
Typed closure conversion is still possible, and could be typed in
a more precise way, abstracting only over values that are outside the
lexical context.

Petricek et al.~\cite{Petricek13} study \emph{coeffects} systems with
judgments of the form $C^r \Gamma \der e : \tau$ and function types
$C^s \sigma \to \tau$, where $r$ and $s$ are coeffect annotations over
an indexed comonad $C$. Their work is orthogonal to the present
one. They study comonadic semantics and algebraic structure of effect
indices.  These indices are simply booleans in our work but we focus on the
syntactic scoping rules that arise from tracking each variable of the
context separately.  % The effects of Petricek et al.~\cite{Petricek13} are 
% too coarse-grained to follow the structure of the current typing environment.

The non-interference property that we prove is different from the
usual treatment in type systems for information flow like the SLam
Calculus~\cite{HeintzeR98}. In SLam, the information flow into
closure is accounted for at abstraction time. In contrast, we account
for the information flow into the closure at application time.

\section{A Type System for Open Closures}
\label{sec:types}
 % \only{\Short}{\vspace{-1ex}}
We define a type system for the simplest problem domain that exhibits
a need for open closure types. Our goal is to determine statically,
for an open term $e$, on which variables of the environment
the value of $e$ depends.

We are interested in weak reduction, and assume a call-by-value
reduction strategy. In this context, an abstraction $\lam x e$ is
already a value, so reducing it does not depend on the environment at
all. More generally, for a term $e$ evaluating to a function
(closure), we make a distinction between the part of the environment
the reduction of $e$ depends on, and the part that will be used when
the resulting closure will be applied. For example, the term
$(y, \lam {x} z)$ depends on the variable $y$ at evaluation time, but
will not need the variable $z$ until the closure in the right pair
component is applied.

This is where we need open closure types.  Our function types are of
the form $\tfun{\Gamma^\Phi}{x \of \sigma^\phi}{\tau}$, where the
mapping $\Phi$ from variables to Booleans indicates on which variables
the evaluation depends at application time.  The Boolean $\phi$ 
indicates whether the argument $x$ is used in the function body. We 
call $\Phi$ the dependency annotation of
$\Gamma$. Our previous example would for instance be typed as follows.
\[ \abovedisplayskip 0.5em
   y \of \sigma^1, z \of \tau^0 \der (y, \lam {x} z) :
   \sigma * (\tfun{y \of \sigma^0, z \of \tau^1}{x \of \rho^0}{\tau}) 
\belowdisplayskip 0.5em\]
The typing expresses that the result of the computation depends on the
variable $y$ but not on the variable $z$.  Moreover, result of the function
in the second component of the pair depends on $z$ but not on $y$.

In general, types are defined by the following grammar.
\[ \abovedisplayskip 0.5em
\begin{array}{l@{~}l@{~}l@{\quad}r}
  \mathtt{Types} \ni \sigma,\tau,\rho & ::= & & \mbox{types} \\
      & \mid & \alpha              & \mbox{atoms} \\
      & \mid & \tau_1 * \tau_2 & \mbox{products} \\
      & \mid & \tfun{\Gamma^\Phi}{x \of \sigma^\phi}{\tau} & \mbox{closures} \\
\end{array}
\]

The closure type $\tfun{\Gamma^\Phi}{x \of \sigma^\phi}{\tau}$ binds
the new argument variable $x$, but not the variables occurring in
$\Gamma$ which are reference variables bound in the
current typing context. Such a type is \emph{well-scoped} only when
all the variables it closes over are actually present in the current
context. In particular, it has no meaning in an empty context, unless
$\Gamma$ is itself empty.

We define well-scoping judgments on contexts ($\Gamma \der$) and types
($\Gamma \der \sigma$). The judgments are defined simultaneously in
Figure~\ref{fig/scoping} and refer to each another. They use
non-annotated contexts: the dependency annotations characterize
data-flow information of \emph{terms}, and are not needed to state the
well-formedness of static types and contexts.

\begin{mathparfig}{fig/scoping}{Well-scoping of types and contexts}
\inferrule[Scope-Context-Nil]
{}
{\emptyset \der}
\and
\inferrule[Scope-Context]
{\Gamma \der \sigma}
{\Gamma, x \of \sigma \der}
\and
\inferrule[Scope-Atom]
  {\Gamma \der}
  {\Gamma \der \alpha}
\\
\inferrule[Scope-Product]
  {\Gamma \der \tau_1\\ \Gamma \der \tau_2}
  {\Gamma \der \tau_1 * \tau_2}
\and
\inferrule[Scope-Closure]
  {\Gamma_0, \Gamma_1 \der\\
   \Gamma_0 \der \sigma\\
   \Gamma_0, x \of \sigma \der \tau}
  {\Gamma_0, \Gamma_1 \der \tfun{\Gamma_0^\Phi}{x \of \sigma^\phi}{\tau}}
\end{mathparfig}

Notice that the closure contexts appearing in the return type of
a closure, $\tau$ in our rule \Rule{Scope-Closure}, may capture the
variable $x$ corresponding to the function argument, which is why we
chose the dependent-arrow--like notation $(x \of \sigma) \to \tau$
rather than only $\sigma \to \tau$. There is no dependency of types on
terms in this system, this is only used for scope tracking.

Note that $\Gamma \der \sigma$ implies $\Gamma \der$ (as proved by
direct induction until an atom or a function closure is reached). Note also that a context
type $\tfun{\Gamma_0}{x\of\sigma}{\tau}$ is well-scoped in any larger
environment $\Gamma_0, \Gamma_1$: the context information may only mention
variables existing in the typing context, but it need not mention all
of them. As a result, well-scoping is preserved by context extension: if
$\Gamma_0 \der \sigma$ and $\Gamma_0, \Gamma_1 \der$, then
$\Gamma_0, \Gamma_1 \der \sigma$.

% \only{\Short}{\vspace{-2ex}}
\subsubsection{A Term Language, and a Naive Attempt at a Type System.}

Our term language, is the lambda calculus with pairs, let bindings and
fixpoints.  This language is sufficient to discuss the most
interesting problems that arise in an application of closure types in
a more realistic language.
\[
\begin{array}{l@{~}l@{~}l@{\quad}r}
  \mathtt{Terms} \ni t,u,e & ::= & & \mbox{terms} \\
      & \mid & x              & \mbox{variables} \\
      & \mid & (e_1, e_2)     & \mbox{pairs} \\
      & \mid & \pi_i(e)       & \mbox{projections ($i \in \{1,2\}$)} \\
      & \mid & \lam x e       & \mbox{lambda abstractions} \\
      & \mid & t \app u       & \mbox{applications} \\
      & \mid & \clet{x}{e_1}{e_2} & \mbox{let declarations} \\
      % & \mid & \fix f x e       & \mbox{function fixpoints}
\end{array}
\]

For didactic purposes, we start with an intuitive type system
presented in Figure~\ref{fig/naive-typing}. The judgment
$\Gamma^\Phi \der e : \sigma$ means that the expression $e$ has type
$\sigma$, in the context $\Gamma$ carrying the intensional information
$\Phi$. Context variable mapped to $0$ in $\Phi$ are not used during
the reduction of $e$ to a value. We will show that the rules
\Rule{App-Tmp} and \Rule{Let-Tmp} are not correct, and introduce a new
judgment to develop correct versions of the rules.

\begin{mathparfig}{fig/naive-typing}{Naive rules for the type system}
\inferrule[Var]
{\Gamma, x \of \sigma, \Delta \der}
{\Gamma^0, x \of \sigma^1, \Delta^0 \der x:\sigma}
\and
\inferrule[Product]
{\Gamma^{\Phi_1} \der e_1 : \tau_1\\
 \Gamma^{\Phi_2} \der e_2 : \tau_2}
{\Gamma^{\Phi_1 + \Phi_2} \der (e_1, e_2) : \tau_1 * \tau_2}
\and
\inferrule[Proj]
{\Gamma^\Phi \der e : \tau_1 * \tau_2}
{\Gamma^\Phi \der \pi_i(e) : \tau_i}
\and
\inferrule[Lam]
  {\Gamma^\Phi, x \of \sigma^\phi \der t : \tau}
  {\Gamma^0 \der \lam x t : \tfun{\Gamma^\Phi}{x \of \sigma^\phi}{\tau}}
\and
\Fixpoint{
  \inferrule[Fix]
  {\Gamma^\Phi,
    f \of (\tfun{\Gamma^\Psi}{x \of \sigma^\phi}{\tau})^\chi,
    x : \sigma^\phi
    \der e : \tau}
  {\Gamma^0 \der \fix f x e : \tfun{\Gamma^\Psi}{x \of \sigma^\phi}{\tau}}
}{}
\and
\inferrule[Let-Tmp]
  {\Gamma^\Phidef \der e_1 : \sigma\\
   \Gamma^\Phibody, x \of \sigma^\phi \der e_2 : \tau}
  {\Gamma^{\phi.\Phidef+\Phibody} \der
     \clet{x}{e_1}{e_2} : \tau}
\and
\inferrule[App-Tmp]
{(\Gamma_0, \Gamma_1)^\Phifun \der
    t :\tfun{\Gamma_0^\Phiclos
    }{x \of \sigma^\phi}{\tau}\\
 (\Gamma_0, \Gamma_1)^\Phiarg \der u : \sigma}
{(\Gamma_0, \Gamma_1)^{\Phifun+\Phiclos+\phi.\Phiarg} \der
    t \app u : \tau}
\end{mathparfig}

In a judgment
$\Gamma^0 \der \lam x t : \tfun{\Gamma^\Phi}{x \of \sigma^0}{\tau}$,
$\Gamma$ is bound only in one place (the context), and
$\alpha$-renaming any of its variable necessitates a mirroring change
in its right-hand-side occurrences ($\Gamma^\Phi$ but also in $\sigma$
and $\tau$), while $x$ is independently bound in the term and in the
type, so the aforementioned type is equivalent to
$\tfun{\Gamma^\Phi}{y \of \sigma}{\tau[y/x]}$. In particular,
variables occurring in types do \emph{not} reveal implementation
details of the underlying term.

The syntax $\phi.\Phi$ used in the \Rule{App-Tmp} and \Rule{Let-Tmp}
rules is a product, or conjunction, of the single boolean dependency
annotation $\phi$, and of the vector dependency annotation $\Phi$. The
sum $\Phi_1 + \Phi_2$ is the disjunction. In the \Rule{Let-Tmp} rule
for example, if the typing of $e_2$ determines that the evaluation of
$e_2$ does not depend on the definition $x = e_1$ ($\phi$ is $0$),
then $\phi.\Phidef$ will mark all the variables used by $e_1$ as not
needed as well (all $0$), and only the variables needed by $e_2$ will
be marked in the result annotation $\phi.\Phidef + \Phibody$.

% We can erase dependency annotations from any such derivation, and get
% a correct derivation in the underlying type system without dependency
% information -- here the simply-typed lambda-calculus.

\begin{version}{\Short}
  In the scoping judgment
  $\Gamma \der \tfun{\Gamma^\Phi}{x \of \sigma}{\tau}$, the repetition
  of the judgment $\Gamma$ is redundant. We could simply write
  $\tfun{\Phi}{x \of \sigma}{\tau}$; -- because in our simplified
  setting the intensional information $\Phi$ can be easily separated
  from the rest of the typing information, corresponding to
  simply-typed types. However, we found out that such a reformulation
  made technical developments harder to follow; the $\Gamma^\Phi$
  form allows one to keep track precisely of the domain of the
  dependency annotation, and domain changes are precisely the
  difficult technical aspect of open closure types. For a more
  detailed discussion of this design point, see the full version of
  this article.
\end{version}

\begin{version}{\Long}
In the introduction we present closure types of the form
$\tfun{\Gamma}{x \of \sigma}{\tau}$, while we here use the apparently
different form $\tfun{\Gamma^\Phi}{x \of \sigma^\phi}{\tau}$. This new
syntax is actually a simpler special case of the previous one: we
could consider a type grammar of the form $\sigma^\phi$, and the type
$\tfun{\Gamma}{x \of \sigma}{\tau}$ would then capture all the needed
information, as each type in $\Gamma$ would come with its own
annotation. Instead, we don't embed dependency information in the
types directly, and use annotated context $\Gamma^\Phi$ to carry
equivalent information. This simplification makes it easier to control
the scoping correctness: it is easier to notice that $\Gamma^\Phi$ and
$\Gamma^\Psi$ are contexts ranging over the same domain than if we
wrote $\Gamma$ and $\Gamma'$. It is made possible by two specific
aspects of this simple system:
\begin{itemize}
\item Our intensional information has a very simple structure, only
  a boolean, that does not apply to the types in depth. The
  simplification would not work, for example, in a security type
  system where products could have components of different security
  levels ($\tau^l * \sigma^r$), but the structure of the rules would
  remain the same.

\item In this example, we are interested mostly in intensional
  information on the contexts, rather than the return type of
  a judgment. The general case rather suggests a judgment of the form
  $\Gamma^\Phi \der e : \sigma^\phi$, but with only boolean
  annotations this boils to a judgment
  $\Gamma^\Phi \der e : \sigma^1$, when we are interested in the value
  being type-checked, and a trivial judgment
  $\Gamma^0 \der e : \sigma^0$, used to type-check terms that will not
  be used in the rest of the computation, and which degenerates to
  a check in the simply-typed lambda-calculus. Instead, we define
  a single judgment $\Gamma^\Phi \der e : \sigma$ corresponding to the
  case where $\phi$ is $1$, and use the notation $\phi.\Phi$ to
  nullify the dependency information coming of from $e$ when the outer
  computation does not actually depend on it ($\phi$ is $0$).
\end{itemize}

While dependency annotations make the development easier to follow,
they do not affect the generality of the type theory, as the common
denominator of open closure type systems is more concerned with the
scoping of closure contexts than the structure of the intensional
information itself.
\end{version}

% \only{\Short}{\vspace{-2ex}}
\subsubsection{Maintaining Closure Contexts.}

As pointed out before, the rules \Rule{App-Tmp} and \Rule{Let-Tmp} of the
system above are wrong (hence the ``temporary'' name): the
left-hand-side of the rule \Rule{App-Tmp} assumes that the closure
captures the same environment $\Gamma$ that it is computed in. This
property is initially true in the closure of the rule \Rule{Lam}, but
is not preserved by \Rule{Let-Tmp} (for the body type) or
\Rule{App-Tmp} (for the return type). This means that the intensional
information in a type may become stale, mentioning variables that have
been removed from the context. We will now fix the type system to
never mention unbound variables.

We need a \emph{closure substitution mechanism} to
explain the closure type $\tau_f = \tfun{\Gamma^\Phi,y \of \rho^\chi}{x \of
  \sigma^\phi}{\tau^\psi}$ of a closure $f$ in the smaller environment
$\Gamma$, given dependency information for $y$ in $\Gamma$.  Assume
for example that $y$ was bound in a let binding $\code{let } y = e \ldots$
and that the type $\tau_f$ leaves the scope of $y$.  Then we
have to adapt the type rules to express the following.  ``If $f$
depends on $y$ (at application time) then $f$ depends on the
variables of $\Gamma$ that $e$ depends on.''

\begin{mathparfig}{fig/type-substitution}{Type substitution}
\inferrule[Subst-Context-Nil]
{}
{\Gamma, y \of \rho, \emptyset \rewto{y}{\Psi} \Gamma}
\and
\inferrule[Subst-Context]
{\Gamma, y \of \rho, \Delta \der \sigma
  \rewto{y}{\Psi}
 \Gamma, \Delta' \der \tau}
{\Gamma, y \of \rho, \Delta, x \of \sigma
  \rewto{y}{\Psi}
 \Gamma, \Delta', x \of \tau}
\and
\inferrule[Subst-Atom]
{\Gamma, y \of \rho, \Delta \rewto{y}{\Psi} \Gamma, \Delta'}
{\Gamma, y \of \rho, \Delta \der \alpha \rewto{y}{\Psi} \Gamma, \Delta' \der \alpha}
\and
\inferrule[Subst-Product]
{\Gamma, y \of \rho, \Delta \der \sigma_1
  \rewto{y}{\Psi} \Gamma, \Delta' \der \tau_1\\
 \Gamma, y \of \rho, \Delta \der \sigma_2
 \rewto{y}{\Psi} \Gamma, \Delta' \der \tau_2}
{\Gamma, y \of \rho, \Delta \der \sigma_1 * \sigma_2
 \rewto{y}{\Psi} \Gamma, \Delta' \der \tau_1 * \tau_2}
\and
\inferrule[Subst-Closure-Notin]
{\Gamma_0, \Gamma_1, y \of \rho, \Delta
  \rewto{y}{\Psi} \Gamma_0, \Gamma_1, \Delta'
}
{\Gamma_0, \Gamma_1, y \of \rho, \Delta \der
   \tfun{\Gamma_0^{\Phi}}
        {x \of \sigma_1^{\phi}}
        {\sigma_2}
 \rewto{y}{\Psi} \Gamma_0, \Gamma_1, \Delta' \der
   \tfun{\Gamma_0^{\Phi}}
        {x \of \sigma_1^{\phi}}
        {\sigma_2}
 }
\and
\inferrule[Subst-Closure]
{ \Gamma, y \of \rho, \Delta, \Gamma_1
  \rewto{y}{\Psi} \Gamma, \Delta', \Gamma'_1\\
 \Gamma, y \of \rho, \Delta \vdash \sigma_1
  \rewto{y}{\Psi} \Gamma, \Delta' \der \sigma_1\\
 \Gamma, y \of \rho, \Delta, x \of \sigma_1 \der \sigma_2
   \rewto{y}{\Psi} \Gamma, \Delta', x \of \sigma_1 \der \tau_2
}
{ \!\!\! \Gamma, y \of \rho, \Delta, \Gamma_1 \der
   \tfun{\Gamma^{\Phi_1},y \of \rho^\chi, \Delta^{\Phi_2}}
        {x \of \sigma_1^{\phi}}
        {\sigma_2}
 \rewto{y}{\Psi} \Gamma, \Delta', \Gamma'_1 \der
   \tfun{\Gamma^{\Phi_1+\chi.\Psi}, {\Delta'}^{\Phi_2}}
        {x \of \sigma_1^{\phi}}
        {\tau_2} \!\!\!
 }
\end{mathparfig}

We define in Figure~\ref{fig/type-substitution} the judgment
$\Gamma, y \of \rho, \Delta \der \sigma \rewto{y}{\Psi} \Gamma, \Delta' \der \tau$. Assuming
that the variable $y$ in the context $\Gamma, y\of\rho, \Delta$ was
let-bound to an definition with usage information $\Gamma^\Psi$, this
judgment transforms any type $\sigma$ in this context in a type $\tau$
in a context $\Gamma, \Delta'$ that does not mention $y$ anymore. Note
that $\Delta$ and $\Delta'$ have the same domain, only their
intensional information changed: any mention of $y$ in a closure type
of $\Delta$ was removed in $\Delta'$. Also note that
$\Gamma, y \of \rho, \Delta$ and $\Gamma, \Delta'$, or $\sigma$ and
$\tau$, are not annotated with dependency annotations themselves: this
is only a scoping transformation that depends on the dependency
annotations of $y$ \emph{in the closures} of $\sigma$ and $\Delta$.

As for the scope-checking judgment, we simultaneously define the
substitutions on contexts themselves
$\Gamma, y \of \rho, \Delta \rewto{y}{\Psi} \Gamma, \Delta'$.
There are two rules for substituting a closure type. If the variable
being substituted is not part of the closure type context
(rule \Rule{Subst-Closure-Notin}), this closure type is
unchanged. Otherwise (rule \Rule{Subst-Closure}) the substitution is
performed in the closure type, and the neededness annotation for $y$
is reported to its definition context $\Gamma_0$.

The following lemma verifies that this substitution preserves
well-scoping of contexts and types.

\begin{lemma}[Substitution and scoping]
  \label{lem:substitution_preserves_scoping}
If $\Gamma, y \of \rho, \Delta \der$ and
$\Gamma, y \of \rho, \Delta \rewto{y}{\Psi} \Gamma, \Delta'$
then $\Gamma, \Delta' \der$.
If
$\Gamma, y \of \rho, \Delta \der \sigma$ and
$\Gamma, y \of \rho, \Delta \der \sigma \rewto{y}{\Psi} \Gamma,\Delta' \der \tau$
then $\Gamma,\Delta' \der \tau$.
\end{lemma}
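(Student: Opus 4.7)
The plan is to prove the two statements simultaneously by induction on the derivation of the substitution judgment, mirroring the mutual structure of Figure~\ref{fig/type-substitution}. Each case applies inversion on the well-scoping hypothesis, applies the inductive hypothesis (of the appropriate statement) to each substitution-judgment premise, and then reassembles using the corresponding \Rule{Scope-} rule. Since the scope and substitution rules are all syntax-directed, inversion is unambiguous.

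For the context statement, \Rule{Subst-Context-Nil} is immediate: inverting \Rule{Scope-Context} on $\Gamma, y \of \rho \der$ gives $\Gamma \der$. In \Rule{Subst-Context}, I invert $\Gamma, y \of \rho, \Delta, x \of \sigma \der$ to extract $\Gamma, y \of \rho, \Delta \der \sigma$, invoke the type-level IH on the premise $\Gamma, y \of \rho, \Delta \der \sigma \rewto{y}{\Psi} \Gamma, \Delta' \der \tau$ to get $\Gamma, \Delta' \der \tau$, and conclude with \Rule{Scope-Context}.

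For the type statement, \Rule{Subst-Atom} and \Rule{Subst-Product} follow mechanically by inversion, the IH on each sub-derivation, and then the matching \Rule{Scope-} rule. \Rule{Subst-Closure-Notin} leaves the closure type syntactically unchanged: inversion of \Rule{Scope-Closure} gives $\Gamma_0 \der$, $\Gamma_0 \der \sigma_1$ and $\Gamma_0, x \of \sigma_1 \der \sigma_2$, while the IH on the context premise yields $\Gamma_0, \Gamma_1, \Delta' \der$; these recombine through \Rule{Scope-Closure} (with $\Gamma_0$ as the captured context and $\Gamma_1, \Delta'$ as the surrounding extension), which is precisely the ``well-scoping preserved by context extension'' fact noted after Figure~\ref{fig/scoping}. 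The main case is \Rule{Subst-Closure}: inverting \Rule{Scope-Closure} on the hypothesis gives $\Gamma, y \of \rho, \Delta, \Gamma_1 \der$, $\Gamma, y \of \rho, \Delta \der \sigma_1$ and $\Gamma, y \of \rho, \Delta, x \of \sigma_1 \der \sigma_2$; the IH on the three premises of the rule delivers respectively $\Gamma, \Delta', \Gamma'_1 \der$, $\Gamma, \Delta' \der \sigma_1$ and $\Gamma, \Delta', x \of \sigma_1 \der \tau_2$; a final \Rule{Scope-Closure}, choosing the captured context to be $\Gamma, \Delta'$ and the outer extension to be $\Gamma'_1$, gives well-scoping of $\tfun{\Gamma^{\Phi_1 + \chi.\Psi}, {\Delta'}^{\Phi_2}}{x \of \sigma_1^\phi}{\tau_2}$ in $\Gamma, \Delta', \Gamma'_1$, noting that the scoping judgments ignore dependency annotations.

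The main obstacle is the bookkeeping in \Rule{Subst-Closure}, where three distinct contexts ($\Gamma, \Delta'$ for the captured part, $\Gamma, \Delta', x \of \sigma_1$ for the return type, and $\Gamma, \Delta', \Gamma'_1$ for the ambient one) play different roles, and one has to verify that the new captured context really coincides with the one under which the IH delivers well-scoping of $\sigma_1$ and $\tau_2$. An accompanying minor subtlety is that the rule premise on $\sigma_1$ must be read with the understanding that changes to $\sigma_1$, if any, can be named back to $\sigma_1$ up to the same scope; everything else is a routine application of inversion and reconstruction.
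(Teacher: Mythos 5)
Your proposal is correct and matches the paper's own proof, which also proceeds by mutual induction on the two substitution judgments, using inversion of the (syntax-directed) scoping rules, the inductive hypotheses on each premise, and reconstruction via the matching \Rule{Scope-} rule — with the \Rule{Subst-Closure} case likewise handled by re-applying \Rule{Scope-Closure} and weakening by the well-scoped $\Gamma'_1$. Your observation that the left-hand type $\sigma_1$ must be preserved by the substitution premise is exactly how the rule is stated, so nothing further is needed.
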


% TODO: mention an explicit weakening lemma; we probably have to add
% variables not just at the end of the context, but also in between.

\begin{version}{\Long}
\begin{proof}
  By mutual induction on the judgments
  ${\Gamma, y \of \rho, \Delta \rewto{y}{\Psi} \Gamma, \Delta'}$
  and\\
  ${\Gamma, y \of \rho, \Delta \der \sigma
    \rewto{y}{\Psi} \Gamma, \Delta' \der \tau}$.

  \begin{caselist}
    \nextcase \Rule{Subst-Context-Nil}: using
    \Rule{Scope-Context-Nil}, $\Gamma, x \of \sigma \der$ implies
    $\Gamma \der \sigma$, which in turn implies $\Gamma \der$.

    \nextcase \Rule{Subst-Context}: from our hypothesis
    $\Gamma, y \of \rho, \Delta, x \of \sigma \der$ we deduce
    $\Gamma, y \of \rho, \Delta \der \sigma$. By induction we can
    deduce $\Gamma, \Delta' \der \tau$, which gives context
    well-formedness $\Gamma, \Delta' \der$.

    \nextcase \Rule{Subst-Atom}: direct by \Rule{Scope-Atom} and
    induction hypothesis.
  
    \nextcase \Rule{Subst-Product}: by inversion, the last rule of the
    derivation of $\Gamma, y \of \rho \der (\sigma_1 * \sigma_2)$ is
    \Rule{Scope-Product}, so we can proceed by direct induction on the
    premises of both judgments.

    \nextcase \Rule{Subst-Closure}: 
    Using our induction hypothesis on
    $\Gamma, y \of \rho, \Delta, \Gamma_1
    \rewto{y}{\Psi} \Gamma, \Delta', \Gamma'_1$ we can deduce that
    $\Gamma, \Delta', \Gamma'_1 \der$ and in particular $\Gamma, \Delta' \der$.

    By inversion, the last rule of
    the derivation of
    $\Gamma, y \of \rho, \Delta, \Gamma_1 \der
      \tfun
        {\Gamma^{\Phi_1}, y \of \rho^\chi, \Delta^{\Phi_2}}
        {y : \sigma_1^{\phi_1}}{\sigma_2}$
    is \Rule{Scope-Closure}. Its premises are
    ${\Gamma, y \of \rho, \Delta \der \sigma_1}$ and
    $\Gamma, y \of \rho, \Delta, x \of \sigma_1 \der \sigma_2$, from which we
    deduce by induction hypothesis $\Gamma, \Delta' \der \tau_1$ and
    $\Gamma, \Delta', x \of \tau_1 \der \tau_2$ respectively, allowing to
    deduce that
    $\Gamma, \Delta' \der
      \tfun
        {\Gamma^{\Phi_1+\chi.\Psi}, {\Delta'}^{\Phi_2}}
        {x : \tau_1}{\tau_2}$,
    which allows to conclude by weakening with the well-scoped $\Gamma'_1$.

    \nextcase \Rule{Subst-Closure-Notin}: direct by induction and inversion.
  \end{caselist}
\qed
\end{proof}
\end{version}

We can now give the correct rules for binders:
\begin{mathpar}
\inferrule[Let]
  {\Gamma^\Phidef \der e_1 : \sigma\\
   \Gamma^\Phibody, x \of \sigma^\phi \der e_2 : \tau\\
   \Gamma, x \of \sigma \der \tau \rewto{x}{\Phidef} \Gamma \der \tau'}
  {\Gamma^{\phi.\Phidef+\Phibody} \der
     \clet{x}{e_1}{e_2} : \tau'}
\and
\inferrule[App]
{(\Gamma_0, \Gamma_1)^\Phifun \der
    t :\tfun{\Gamma_0^\Phiclos}{x \of \sigma^\phi}{\tau}\\
  (\Gamma_0, \Gamma_1)^\Phiarg \der u : \sigma\\
  \Gamma_0, \Gamma_1, x \of \sigma \der \tau
    \rewto{x}{\Phiarg} \Gamma_0, \Gamma_1 \der \tau'}
{(\Gamma_0,\Gamma_1)^{\Phifun+\Phiclos+\phi.\Phiarg} \der
    t \app u : \tau'}
\end{mathpar}

%TODO give example of typing

\begin{lemma}[Typing respects scoping]
  \label{lem:typing_preserves_scoping}
  If $\Gamma \der t : \sigma$ holds,
  then $\Gamma \der \sigma$ holds.
\end{lemma}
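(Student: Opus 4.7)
The plan is to proceed by induction on the derivation of $\Gamma \der t : \sigma$, producing in each case a scoping derivation $\Gamma \der \sigma$ from the inductive hypotheses applied to the subderivations. Since well-scoping is insensitive to dependency annotations, we simply read off the underlying contexts in each typing rule.

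For the structural rules the construction essentially mirrors the shape of the typing rule. In \Rule{Var}, the context well-formedness premise combined with iterated inversion of \Rule{Scope-Context} yields $\Gamma \der \sigma$, which can then be weakened to the full context by the end-extension property of well-scoping stated after Figure~\ref{fig/scoping}. The cases \Rule{Product} and \Rule{Proj} follow directly from the IH(s) with \Rule{Scope-Product} or by inverting it. For \Rule{Lam}, the IH on the body yields $\Gamma, x \of \sigma \der \tau$; inverting \Rule{Scope-Context} gives $\Gamma \der \sigma$, and \Rule{Scope-Closure} with an empty $\Gamma_1$ assembles the closure type. For \Rule{Let}, the IH on $e_2$ yields $\Gamma, x \of \sigma \der \tau$, and Lemma~\ref{lem:substitution_preserves_scoping} applied to the substitution premise delivers the goal $\Gamma \der \tau'$.

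The main obstacle will be the \Rule{App} case. The IH on $t$ gives $(\Gamma_0, \Gamma_1) \der \tfun{\Gamma_0^{\Phiclos}}{x \of \sigma^\phi}{\tau}$, and inverting \Rule{Scope-Closure} produces $\Gamma_0 \der \sigma$ and $\Gamma_0, x \of \sigma \der \tau$. To invoke Lemma~\ref{lem:substitution_preserves_scoping} on the substitution premise $\Gamma_0, \Gamma_1, x \of \sigma \der \tau \rewto{x}{\Phiarg} \Gamma_0, \Gamma_1 \der \tau'$, I need the stronger scoping $\Gamma_0, \Gamma_1, x \of \sigma \der \tau$. Bridging this gap requires a mild generalization of the end-extension property, inserting $\Gamma_1$ between $\Gamma_0$ and $x \of \sigma$; this is legitimate because $\sigma$ is already scoped in $\Gamma_0$ and the bound variable $x$ is fresh for $\Gamma_1$. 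With this weakening in hand, the substitution lemma yields $\Gamma_0, \Gamma_1 \der \tau'$ and the case closes. Apart from this point the argument is a routine structural induction.
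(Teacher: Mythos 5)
Your proof is correct and takes essentially the same route as the paper's: induction on the typing derivation, with inversion of \Rule{Scope-Closure} plus weakening and Lemma~\ref{lem:substitution_preserves_scoping} in the \Rule{App} and \Rule{Let} cases. The paper's \Rule{App} case performs the same mid-context weakening of $\Gamma_0, x \of \sigma \der \tau$ into $\Gamma_0, \Gamma_1, x \of \sigma \der \tau$ that you flag, though it does so without comment; your explicit justification of that step is a welcome extra precision, not a divergence.
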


This lemma guarantees that we fixed the problem of stale intensional
information: types appearing in the typing judgment are always
well-scoped.

\begin{version}{\Long}
\begin{proof}
  By induction on the derivation of $\Gamma \der t : \sigma$.
  \begin{caselist}
    \nextcase \Rule{Var}: from the premise
    $\Gamma^0, x:\sigma^1, \Delta^0 \der$ we have
    $\Gamma \der \sigma$.

    \nextcase \Rule{Prod}: direct by induction.

    \nextcase \Rule{Proj}: the induction hypothesis is
    $\Gamma \der \tau_1 * \tau_2$, from which we get
    $\Gamma \der \tau_i$ (for $i \in \{1,2\}$) by inversion.

    \nextcase \Rule{Lam}: the induction hypothesis is
    $\Gamma, x \of \sigma \der \tau$. From this we get
    $\Gamma, x \of \sigma$ and therefore $\Gamma \der \sigma$, which
    allows to conclude with \Rule{Scope-Closure}.

    \Fixpoint{
      \nextcase \Rule{Fix}: the hypothesis implies
      $\Gamma, f \of \tfun{\Gamma^\Psi}{x \of \sigma^\phi}{\tau} \der$,
      which in turn implies
      $\Gamma \der \tfun{\Gamma^\Psi}{x \of \sigma^\phi}{\tau}$.
    }{}

    \nextcase \Rule{App}: Using our induction hypothesis on the first
    premise give us that
    $\Gamma_0, \Gamma_1 \der \tfun{\Gamma_0^{\Phi_{\code{clos}}}}{x \of \sigma}{\tau}$,
    so by inversion $\Gamma_0, \Gamma_1 \der$ and
    $\Gamma_0, x \of \sigma \der \tau$. The latter fact can be weakened into
    $\Gamma_0, \Gamma_1, x \of \sigma \der \tau$, and then combined
    with the last premise
    $\Gamma_0, \Gamma_1, x \of \sigma \der \tau \rewto{x}{\Phi_{\code{arg}}}
     \Gamma_0, \Gamma_1 \der \tau'$
    and Lemma~\ref{lem:substitution_preserves_scoping} to get our goal
    $\Gamma_0, \Gamma_1 \der \tau'$.

    \nextcase \Rule{Let}: reasoning similar to the App case. By
    induction on the middle premise, we have
    $\Gamma, x \of \sigma \der \tau$, combined with the right premise
    $\Gamma, x \of \sigma \der \tau \rewto{x}{\Phidef} \Gamma \der \tau'$
    we get $\Gamma \der \tau'$.

  \end{caselist}
\qed
\end{proof}
\end{version}

It is handy to introduce a convenient derived notation
$\Gamma^\Phi \der \tau \rewto{y}{\Psi} \Gamma'^\Phip \der \tau'$ that is defined below.
This substitution relation does not only remove $y$ from the open closure
types in $\Gamma$, it also updates the dependency annotation on
$\Gamma$ to add the dependency $\Psi$, corresponding to all the
variables that $y$ depended on -- if it is itself marked as needed.
\begin{mathpar} \abovedisplayskip 0em \belowdisplayskip 0em
\inferrule[]
{\Gamma, y \of \rho, \Delta \der \tau \rewto{y}{\Psi} \Gamma, \Delta' \der \tau'}
{\Gamma^{\Phi_1}, y \of \rho^\chi, \Delta^{\Phi_2} \der \tau \rewto{y}{\Psi} \Gamma^{\Phi_1 + \chi.\Psi}, {\Delta'}^{\Phi_2} \der \tau'}
\end{mathpar}

\begin{version}{\Long}
It is interesting to see that substituting $y$ away in
$\Gamma^{\Phi_1}, y \of \rho, \Delta^{\Phi_2}$ changes the annotation
on $\Gamma$, but not its types ($\Gamma$ is unchanged in the output as
its types may not depend on $y$), while it changes the types in
$\Delta$ but not its annotation ($\Phi_2$ is unchanged in the output
as a value for $y$ may only depend on variables from $\Gamma$, not
$\Delta$).
\end{version}

\begin{version}{\Long}
  The following technical results allow us to permute substitutions on
  unrelated variables. They will be used in the typing soundness proof
  of the next section~(Theorem~\ref{thm:sound}).
 
\begin{lemma}[Confluence]
  \label{lem:confluence}
  If $\Gamma_1 \der \tau_1 \rewto{x_a}{\Psi_a} \Gamma_{2a} \der \tau_{2a}$
  and $\Gamma_1 \der \tau_1 \rewto{x_b}{\Psi_b} \Gamma_{2b} \der \tau_{2b}$
  then there exists a unique $\Gamma_3 \der \tau_3$ such that
  \[ \Gamma_{2a} \der \tau_{2a} \rewto{x_b}{(\Psi_b + \Psi_a . \Psi_b(x_a))} \Gamma_3 \der \tau_3
     \qquad \textrm{and} \qquad
     \Gamma_{2b} \der \tau_{2b} \rewto{x_a}{(\Psi_a + \Psi_b . \Psi_a(x_b))} \Gamma_3 \der \tau_3 \]
\end{lemma}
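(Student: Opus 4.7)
The plan is to proceed by mutual induction on the structure of the two substitution derivations $\Gamma_1 \der \tau_1 \rewto{x_a}{\Psi_a} \Gamma_{2a} \der \tau_{2a}$ and $\Gamma_1 \der \tau_1 \rewto{x_b}{\Psi_b} \Gamma_{2b} \der \tau_{2b}$, with parallel claims for the context-level substitutions that they recursively invoke. Without loss of generality I assume $x_a$ occurs before $x_b$ in $\Gamma_1$; the other order is symmetric, and in our convention $\Psi_a$ cannot mention $x_b$, so $\Psi_a(x_b) = 0$ and the update $\Psi_a + \Psi_b \cdot \Psi_a(x_b)$ collapses to $\Psi_a$, matching the direction in which the substitution judgment naturally decomposes.

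First I dispatch the easy cases. For atoms, neither substitution touches the type, and I only have to compare the two updated context annotations, which is an immediate Boolean calculation. For products, the two induction hypotheses applied to each component assemble the unique $\tau_3$. The closure case $\tfun{\Gamma_0^\Phi}{x \of \sigma^\phi}{\sigma'}$ is split on whether $x_a, x_b$ are captured in $\Gamma_0$. If neither is captured, both derivations use \Rule{Subst-Closure-Notin}, and the closure type is unchanged on both sides. If only one, say $x_a$, is captured, one path applies \Rule{Subst-Closure} for $x_a$ followed by \Rule{Subst-Closure-Notin} for $x_b$, and the other path does the opposite; in each branch the induction hypothesis on $\sigma, \sigma'$ and on the inner context substitution closes the case.

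The critical subcase is when both $x_a$ and $x_b$ are captured inside $\Gamma_0$. Writing $\chi_a, \chi_b$ for the Boolean weights of $x_a, x_b$ in $\Phi$, Path A contributes to the $\Gamma$ prefix of $\Gamma_0$ the annotation $\chi_a \Psi_a + \chi_b(\Psi_b + \Psi_a \cdot \Psi_b(x_a))$, while Path B contributes $\chi_b \Psi_b + (\chi_a + \chi_b \cdot \Psi_b(x_a))\,\Psi_a$; distributivity and commutativity of Boolean $+$ and $\cdot$ show these are equal. A parallel computation shows the slice $\Delta$ between $x_a$ and $x_b$ receives the same annotation updates on both paths, and the induction hypothesis applied to the recursive calls in $\sigma, \sigma'$ and the subcontexts delivers syntactic equality of $\tau_3$ and $\Gamma_3$. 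Uniqueness is then immediate, because the substitution judgment is syntax-directed on both the type and the context, so every input admits at most one derivation.

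The main obstacle I anticipate is the bookkeeping in this closure case rather than any deep idea: the outer context may contain intermediate variables whose types themselves capture $x_a$, so the updates of their annotations under the two paths must be shown to agree pointwise, and the statement of the derived substitution has to be unfolded to expose the individual annotations $\Phi_1, \Phi_2, \chi$ before the Boolean identity above can be matched. Handling this at the context level likely requires an auxiliary lemma, proved simultaneously, stating the analogous confluence for context substitutions $\Gamma_1 \rewto{x_a}{\Psi_a} \Gamma_{2a}$ and $\Gamma_1 \rewto{x_b}{\Psi_b} \Gamma_{2b}$, so that the mutual induction with the type-level statement goes through cleanly.
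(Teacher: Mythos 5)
Your proposal is correct and follows essentially the same route as the paper's proof: assume without loss of generality that $x_a$ precedes $x_b$ so that $\Psi_a(x_b)=0$, and reduce the interesting closure case to the Boolean identity $\chi_a\Psi_a + \chi_b(\Psi_b + \Psi_a.\Psi_b(x_a)) = \chi_b\Psi_b + (\chi_a + \chi_b.\Psi_b(x_a)).\Psi_a$ on the annotations of the prefix before $x_a$, together with the observation that the slice between the two variables keeps the same domain so its restriction under $\Psi_b$ is unaffected. The explicit mutual induction on the type- and context-level substitution judgments that you add is scaffolding the paper leaves implicit, but the mathematical content is identical.
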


\begin{proof}
  Without loss of generality we can assume that $x_a$ appears before
  $x_b$ in $\Gamma_1$, so in particular $\Psi_a(x_b) = 0$. For any subcontext of the form
  \[ {\Delta_1}^{\Psi_1}, x_a \of {\sigma_a}^{\rho_a}, {\Delta_2}^{\Psi_2}, x_b \of {\sigma_b}^{\rho_b} \],
  assume that substituting $\Psi_a$ for $x_a$ first results in
  \[ {\Delta_1}^{\Psi_1 + \rho_a . \Psi_a}, {\Delta_{2a}}^{\Psi_2}, x_b \of {\sigma_{ba}}^{\rho_b} \],
  while substituting $\Psi_b$ for $x_b$ first results in
  \[ {\Delta_1}^{\Psi_1 + \rho_b . \Psi_b(\Delta_1)}, x_a \of {\sigma_a}^{\rho_a + \rho_b . \Psi_b(x_a)}, {\Delta_2}^{\Psi_2 + \rho_b . \Psi_b(\Delta_2)} \].
  Substituting $\Psi_b + \Psi_a . \Psi_b(x_a)$ for $x_b$ in 
  \[ {\Delta_1}^{\Psi_1 + \rho_a . \Psi_a}, {\Delta_{2a}}^{\Psi_2}, x_b \of {\sigma_{ba}}^{\rho_b} \]
  results in
  \[ {\Delta_1}^{\Psi_1 + \rho_a . \Psi_a + \rho_b . (\Psi_b + \Psi_a . \Psi_b(x_a))(\Delta_1)}, {\Delta_{2a}}^{\Psi_2 + \rho_b . (\Psi_b + \Psi_a . \Psi_b(x_a))(\Delta_{2a})} \]
  which simplifies to
  \[ {\Delta_1}^{\Psi_1 + \rho_a . \Psi_a + \rho_b . \Psi_b(\Delta_1) + \rho_b . \Psi_b(x_a) . \Psi_a}, {\Delta_{2a}}^{\Psi_2 + \rho_b . \Psi_b(\Delta_{2a})} \]
  Substituting $\Psi_a + \Psi_b . \Psi_a(x_b) = \Psi_a$ for $x_a$ in 
  \[ {\Delta_1}^{\Psi_1 + \rho_b . \Psi_b(\Delta_1)}, x_a \of {\sigma_a}^{\rho_a + \rho_b . \Psi_b(x_a)}, {\Delta_2}^{\Psi_2 + \rho_b . \Psi_b(\Delta_2)} \]
  results in
  \[ {\Delta_1}^{\Psi_1 + \rho_b . \Psi_b(\Delta_1) + (\rho_a + \rho_b . \Psi_b(x_a)) . \Psi_a)}, {\Delta_{2a}}^{\Psi_2 + \rho_b . \Psi_b(\Delta_2)} \]
  which simplifies to
  \[ {\Delta_1}^{\Psi_1 + \rho_b . \Psi_b(\Delta_1) + \rho_a . \Psi_a + \rho_b . \Psi_b(x_a) . \Psi_a}, {\Delta_{2a}}^{\Psi_2 + \rho_b . \Psi_b(\Delta_2)} \]

  Given that $\Delta_2$ and $\Delta_{2a}$ have the same domain
  (only different types), the restrictions $\Psi_b(\Delta_2)$ and
  $\Psi_b(\Delta_{2a})$ are equal, allowing to conclude that the two substitutions indeed end in the same sequent
  \[ {\Delta_1}^{\Psi_1 + (\rho_a + \rho_b . \Psi_b(x_a)) . \Psi_a + \rho_b . \Psi_b(\Delta_1)}, {\Delta_{2a}}^{\Psi_2 + \rho_b . \Psi_b(\Delta_2)} \]

  Note that we can make sense, informally, of this resulting sequent. The variable used by this final contexts are
  \begin{itemize}
  \item the variables used of $\Delta_1$ used in the initial judgment ($\Psi_1$)
  \item the variables of $\Delta_2$ (updated in $\Delta_{2a}$ to remove references to the substituted variable $x_a$) used in the initial judgment ($\Psi_2$)
  \item the variables used by $\Psi_b$, if it is used ($\rho_b$ is $1$)
  \item the variables used by $\Psi_a$ if either $x$ was used ($\rho_a$ is $1$), or if $x_b$ is used ($\rho_b$ is $1$) and itself uses $x_a$ ($\Psi_b(x_a)$ is $1$).
  \end{itemize}
  To get this intuition, we considered again the annotations as
  booleans, but note that the equivalence proof was done in a purely
  algebraic manner. It should therefore be preserved in future work
  where the intensional information has a richer structure.

\qed
\end{proof}

\begin{corollary}[Reordering of substitutions]
\label{lem:reordering}
  If $\Psi_a$ and $\Psi_b$ have domain $\Gamma$, and \[ {\Gamma_1}^{\Phi_1} \der \tau_1 \rewto{x_a}{\Psi_a} {\Gamma_2}^{\Phi_2} \der \tau_2 \rewto{x_b}{(\Psi_b + \Psi_b(x_a) . \Psi_a)} {\Gamma_3}^{\Phi_3} \der \tau_3 \]
  then there exists ${\Gammap_2}^{\Phip_2} \der \tau'_2$ such that
  \[ {\Gamma_1}^{\Phi_1} \der \tau_1 \rewto{x_b}{\Psi_b} {\Gammap_2}^{\Phip_2} \der \tau'_2 \rewto{x_a}{(\Psi_a + \Psi_a(x_b) .\Psi_b)} {\Gamma_3}^{\Phi_3} \der \tau_3 \]
\end{corollary}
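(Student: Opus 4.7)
The plan is to derive the corollary as a direct consequence of the Confluence Lemma (Lemma~\ref{lem:confluence}). The hypothesis gives the ``upper-right'' path of what should be a commuting square, and I need to produce the ``lower-left'' path that substitutes $x_b$ first and $x_a$ second. The first preparatory step is to construct some ${\Gammap_2}^{\Phip_2} \der \tau'_2$ together with a derivation ${\Gamma_1}^{\Phi_1} \der \tau_1 \rewto{x_b}{\Psi_b} {\Gammap_2}^{\Phip_2} \der \tau'_2$. The variable $x_b$ must already occur in $\Gamma_1$, since the second step of the hypothesis substitutes it out of $\Gamma_2$ and no rule of Figure~\ref{fig/type-substitution} introduces new variables; combined with the assumption that $\Psi_b$ has the correct domain $\Gamma$, such a derivation is built by straightforward induction on $\tau_1$ using the syntax-directed rules.

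With this in hand, I would apply Lemma~\ref{lem:confluence} to the diverging pair ${\Gamma_1}^{\Phi_1} \der \tau_1 \rewto{x_a}{\Psi_a} {\Gamma_2}^{\Phi_2} \der \tau_2$ and ${\Gamma_1}^{\Phi_1} \der \tau_1 \rewto{x_b}{\Psi_b} {\Gammap_2}^{\Phip_2} \der \tau'_2$. This produces a common reduct ${\Gamma_3''}^{\Phi_3''} \der \tau_3''$ together with two closing substitutions, one out of $\tau_2$ carrying annotation $\Psi_b + \Psi_a . \Psi_b(x_a)$ and one out of $\tau'_2$ carrying annotation $\Psi_a + \Psi_b . \Psi_a(x_b)$. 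Since $\Psi_b(x_a)$ is a boolean scalar, $\Psi_a . \Psi_b(x_a) = \Psi_b(x_a) . \Psi_a$, so the first of these coincides exactly with the second step of the hypothesis. The uniqueness clause of Lemma~\ref{lem:confluence} (itself resting on the substitution judgment being functional in its inputs) then forces ${\Gamma_3''}^{\Phi_3''} \der \tau_3''$ to equal ${\Gamma_3}^{\Phi_3} \der \tau_3$, and the second closing substitution becomes exactly the required derivation.

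The main obstacle is notational bookkeeping rather than mathematical depth: the derived notation $\Gamma^\Phi \der \tau \rewto{y}{\Psi} \Gammap^{\Phip} \der \tau'$ silently updates the prefix annotation by adding $\chi . \Psi$, so one has to unfold it into the underlying unannotated form to check that the context-annotation components produced along both paths really do land on the same $\Phi_3$. This verification reduces to the same algebraic equalities on annotation vectors already carried out inside the proof of Lemma~\ref{lem:confluence}, so the whole argument essentially amounts to invoking confluence together with commutativity of scalar-vector multiplication.
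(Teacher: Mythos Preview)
Your proposal is correct and matches the paper's intent: the corollary is stated without proof precisely because it is meant to follow directly from Lemma~\ref{lem:confluence}, and your argument spells out exactly that derivation---construct the $x_b$-first step, invoke confluence, and use uniqueness together with the commutativity $\Psi_a \cdot \Psi_b(x_a) = \Psi_b(x_a) \cdot \Psi_a$ to identify the common reduct with ${\Gamma_3}^{\Phi_3} \der \tau_3$. Your final paragraph on unfolding the derived annotated judgment is the right observation, and the algebraic check it requires is indeed already contained in the proof of Lemma~\ref{lem:confluence}.
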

\end{version}

\begin{version}{\Long}
\subsubsection{On Open Closure Types on the Left of Function Types.}

Note that the \Rule{Subst-Closure} handles the function type on the
left-hand-side of the arrow, $\sigma_1$, is a specific and subtle way:
it must be unchanged by the substitution judgment. Under a slightly
simplified form, the judgment reads:
\begin{mathpar}
  \inferrule
{\Gamma, y \of \rho, \Delta \vdash \sigma_1
  \rewto{y}{\Psi} \Gamma, \Delta' \der \sigma_1\\
 \Gamma, y \of \rho, \Delta, x \of \sigma_1 \der \sigma_2
   \rewto{y}{\Psi} \Gamma, \Delta', x \of \sigma_1 \der \tau_2
}
{\Gamma, y \of \rho, \Delta \der
   \tfun{\Gamma^{\Phi_1},y \of \rho^\chi, \Delta^{\Phi_2}}
        {x \of \sigma_1^{\phi}}
        {\sigma_2}
 \rewto{y}{\Psi} \Gamma, \Delta' \der
   \tfun{\Gamma^{\Phi_1+\chi.\Psi}, {\Delta'}^{\Phi_2}}
        {x \of \sigma_1^{\phi}}
        {\tau_2}
 }
\end{mathpar}

This corresponds to the usual ``change of direction'' on the left of
arrow type. A substitution
$\Gamma, y \of \rho \der \tau \rewto{y}{\Psi} \Gamma \der \tau'$ is
a lossy transformation, as we forget how $y$ is used in $\tau$ and
instead mix its definition information with the rest of the context
information in $\tau'$. Such a loss makes sense for the return type of
a function: we forget information about the return value. But by
contravariance of input argument, we should instead \emph{refine} the
argument types.

But as the gain or loss or precision correspond to variables going out
of scope, such a refinement could only happen in smaller nested
scopes. On the contrary, when going out to a wider scope, the only
possibility is that the closure type does not depend on the particular
variable being substituted (so the type $\sigma$ is preserved,
$\Gamma, y\of\rho, \Delta \der \sigma \rewto{y}{\Psi} \Gamma, \Delta' \der \sigma$)
. If the variable was used, a loss of precision would be
possible: this substitution must be rejected.

\pagebreak
Consider the following example:
\begin{lstlisting}
(* in context $\Gamma$ *)
  let x : int = e_x in 
    let y : bool = e_y in
      let f (g : $\tfun{\Gamma^0, x\of\tyc{int}^1}{z:\tyc{unit^0}}{\tyc{int}}$) : $\tyc{int}$ = g () in
      f ($\lambda$z. x);
      f
\end{lstlisting}

In the environment $\Gamma, x \of \tyc{int}, y \of \tyc{bool}$, the
type of \code{f}'s function argument \code{g} describes a function
whose result depends on \code{x}. We can still express this dependency
when substituting the variable \code{y} away, that is when considering
the type of the expression \code{(let y = ... in let f g = ... in f)}
as a whole: the argument type will still have type
$\tfun{\Gamma^0, x \of \tyc{int}^1}{z:\tyc{unit}}{\tyc{int}}$. However,
this dependency on \code{x} cannot make sense anymore if we remove
\code{x} itself from the context, the substitution does not preserve
this function type. This makes the whole expression
\code{(let x = ... in (let y = ... in let f g = ... in f))} ill-typed,
as \code{x} escapes its scope in the argument function type.

One way to understand this requirement is that there are two parts to
having an analysis be fully ``higher-order''.  Fist, it handles programs
that take functions as input, and second, it handles programs that return functions as
result of computations. Some languages only pass functions as
parameters (this is in particular the case of C with pointers to
global functions), some constructions such as currying fundamentally
rely on function creation with environment capture. Our system
proposes a new way to handle this second part, and is intensionally
simplistic, to the point of being restrictive, on the rest.

In a non-toy language one would want to add subtyping of context
information, that would allow controlled loss of precision to, for
example, create lists of functions with slightly different context
information. Another useful feature would be context information
polymorphism to express functions being parametric with respect to the
context information of their argument. This is intentionally left to
future work.
\end{version}

\section{A Big-Step Operational Semantics}
% \only{\Short}{\vspace{-1ex}}

In this section, we will define an operational semantics for our term
language, and use it to prove the soundness of the type system
(Theorem~\ref{thm:sound}). Our semantics is equivalent to the usual
call-by-value big-step reduction semantics for the lambda-calculus in
the sense that computation happens at the same time. There is however
a notable difference.

Function closures are not built in the same way as they are in classical
big-step semantics. Usually, we have a rule of the form
$ \inferrule{}{V \der \lam x t \redto{} (V, \lam x t)} $ where the
closure for $\lam x t$ is a pair of the value environment $V$
(possibly restricted to its subset appearing in $t$) and the function
code. In contrast, we capture no values at closure creation time in
our semantics:
$ \inferrule{}{V \der \lam x t \redto{} (\emptyset, \lam x t)}$. The
captured values will be added to the closure incrementally, during the
reduction of binding forms that introduced them in the context.

Consider for example the following two derivations; one in the
classic big-step reduction, and the other in our alternative
system.
\begin{mathpar}
\inferrule[Classic-Red-Let]
{x \of v \der x \credto v\\
 x \of v, y \of v \der \lam z y \credto ((x \mapsto v, y \mapsto v), \lam z y)}
{x \of v \der \clet{y}{x}{\lam z y} \credto ((x \mapsto v, y \mapsto v), \lam z y)}
\and
\inferrule[Our-Red-Let]
{x \of v \der x \redto v\\
 x \of v, y \of v \der \lam z y \redto (\fvenv{x,y}, \emptyset, \lam z y)\\
 (\emptyset, \lam z y) \rewto{y}{v} (\fvenv{x}, y \mapsto v, \lam z y)}
{x \of v \der \clet{y}{x}{\lam z y} \redto (\fvenv{x}, y \mapsto v, \lam z y)}
\end{mathpar}

Rather than capturing the whole environment in a closure, we store
none at all at the beginning (merely record their names), and add
values incrementally, just before they get popped from the
environment. This is done by the \emph{value substitution} judgment
$w \rewto{x}{v} w'$ that we will define in this section. The
reason for this choice is that this closely corresponds to our typing
rules, value substitution being a runtime counterpart to substitution
in types $\Gamma \der \sigma \rewto{x}{\Phi} \Gamma' \der \sigma'$;
this common structure is essential to prove of the type soundness
(Theorem~\ref{thm:sound}).

Note that derivations in this modified system and in the original
one are in one-to-one mapping. % We will prove this equivalence result
% between them: t
It should not be considered a new dynamic semantics, rather
a reformulation that is convenient for our proofs as it mirrors our
static judgment structure.

% \only{\Short}{\vspace{-2ex}}
\subsubsection{Values and Value Substitution.}

Values are defined as follows.
\[
\begin{array}{l@{~}l@{~}l@{\quad}r}
  \mathtt{Val} \ni v,w & ::= & & \mbox{values} \\
      & \mid & \code{v}_{\alpha} & \mbox{value of atomic type} \\
      & \mid & (v, w) & \mbox{value tuples} \\
      & \mid & (\fvenv{x_j}_\jJ, (x_i \mapsto v_i)_\iI, \lam x t)
               & \mbox{function closures} \\
\Fixpoint{
      & \mid & (\fvenv{x_j}_\jJ, (x_i \mapsto v_i)_\iI, \fix f x t)
               & \mbox{recursive function closures} \\
}{}
\end{array}
\]

The set of variables bound in a closure is split into an ordered
mapping $(x_i \mapsto v_i)_\iI$ for variables that have been
substituted to their value, and a simple list $\fvenv{x_j}_\jJ$ of
variables whose value has not yet been captured. They are both binding
occurrences of variables bound in $t$; $\alpha$-renaming them is
correct as long as $t$ is updated as well.

To formulate our type soundness result, we define a typing judgment on
values $\Gamma \der v : \sigma$ in Figure~\ref{fig/value-typing}. An
intuition for the rule \Rule{Value-Closure} is the
following. Internally, the term $t$ has a dependency $\Gamma^\Phi$ on
the ambient context, but also dependencies $(\tau_i^{\psi_i})$ on the
captured variables. But externally, the type may not mention the
captured variables, so it reports a different dependency
$\Gamma^\Phip$ that corresponds to the internal dependency
$\Gamma^\Phi$, combined with the dependencies $(\Psi_i)$ of the
captured values. Both families $(\psi_i)_\iI$ and $(\Psi_i)_\iI$ are
existentially quantified in this rule.

In the judgment rule, the notation $(x_j:\tau_j)_{j<i}$ is meant to
define the environment of each $(x_i:\tau_i)$ as $\Gamma^\Phi$, plus
all the $(x_j:\tau_j)$ that come before $x_i$ in the typing judgment
$\Gamma^\Phi, (x_i:\tau_i)_\iI, x:\sigma^\phi \der t$. The notation
$\dots \rewto{(x_i)}{(\Psi_i)} \dots$ denotes the sequence of
substitutions for all $(x_i,\Psi_i)$, with the rightmost variable
(introduced last) substituted first: in our dynamic semantics, values
are captured by the closure in the LIFO order in which their binding
variables enter and leave the lexical scope.

\begin{mathparfig}{fig/value-typing}{Value typing}
\inferrule[Value-Atom]
{\Gamma \der}
{\Gamma \der \code{v}_\alpha : \alpha}
\and
\inferrule[Value-Product]
{\Gamma \der v_1 : \tau_1\\
 \Gamma \der v_2 : \tau_2}
{\Gamma \der (v_1,v_2) : \tau_1 * \tau_2}
\and
\inferrule[Value-Closure]
{\Gamma, \Gamma_1 \der\\
  \forall \iI,\uad
    \Gamma, (x_j \of \tau_j)_{j < i} \der v_i : \tau_i\\
 \Gamma^\Phi, (x_i\of\tau_i^{\psi_i})_\iI, x \of \sigma^\phi \der t : \tau\\
 \Gamma^\Phi, (x_i \of \tau_i^{\psi_i})_\iI, x \of \sigma^\phi \der \tau
   \rewto{(x_i)}{(\Psi_i)} \Gamma^\Phip, x \of \sigma^\phi \der \tau'
}
{\Gamma, \Gamma_1 \der (\dom{\Gamma}, (x_i \mapsto v_i)_\iI, \lam x t) :
   \tfun{\Gamma^\Phip}{x \of {\sigma}^\phi}{\tau'}}
\and
\Fixpoint{
  \inferrule[Value-Closure-Fix]
  {\Gamma, \Gamma_1 \der\\
   \forall \iI,\uad
    \Gamma, (x_j \of \tau_j)_{j < i} \der v_i : \tau_i\\
    \Gamma^\Phi, (x_i\of\tau_i^{\psi_i})_\iI,
      f \of (\tfun{\Gamma, (x_i \of \tau_i^{\psi_i})}{x \of \sigma^\phi}{\tau})^\chi,
      x \of \sigma^\phi \der t : \tau\\
    \Gamma^\Phi, (x_i \of \tau_i^{\psi_i})_\iI,       
      x \of \sigma^\phi \der \tau
    \rewto{(x_i)}{(\Psi_i)} \Gamma^\Phip,
      x \of \sigma^\phi \der \tau'
  }
  {\Gamma, \Gamma_1 \der (\dom{\Gamma}, (x_i \mapsto v_i)_\iI, \fix f x t) :
    \tfun{\Gamma^\Phip}{x \of {\sigma}^\phi}{\tau'}}
}{}
\end{mathparfig}

% \only{\Short}{\vspace{-2ex}}
\subsubsection{Substituting Values.}

\begin{version}{\Short}
  The value substitution judgment, define in
  Figure~\ref{fig/value-substitution}, is an operational counterpart
  to the substitution of variables in closures types.
\end{version}
\begin{version}{\Or\Medium\Long}
  In the typing rules, we use the substitution relation
  ${\Gamma, y \of \rho \der \sigma \rewto{y}{\Phi} \Gamma \der \tau}$
  to remove the variable $y$ from the closure types in
  $\sigma$. Correspondingly, in our runtime semantics, we
 \emph{add} the variable $y$ to the vector stored
  in the closure value, by a value substitution judgment
  $w \rewto{y}{v} w'$ (see Figure~\ref{fig/value-substitution})
  when the binding of $y$ is removed from the evaluation context.
\end{version}

\begin{mathparfig}{fig/value-substitution}{Value substitution}
\inferrule[Subst-Value-Atom]
{}
{\code{v}_\alpha \rewto{y}{v} \code{v}_\alpha}
\and
\inferrule[Subst-Value-Product]
{w_1 \rewto{y}{v} w'_1\\
 w_2 \rewto{y}{v} w'_2}
{(w_1, w_2) \rewto{y}{v} (w'_1, w'_2)}
\and
\inferrule[Subst-Value-Closure]
{}
{\left(\fvenv{x_{j_1}, \dots, x_{j_n}, y}, (x_i \mapsto w_i)_\iI, t\right) \rewto{y}{v} 
 \left(\fvenv{x_{j_1}, \dots, x_{j_n}} \push{(y \mapsto v)}(x_i \mapsto w_i)_i, t\right)}
\and
\inferrule[Subst-Value-Closure-Notin]
{y \notin (x_j)_\jJ}
{\left(\fvenv{x_j}_\jJ, (x_i \mapsto w_i)_\iI, t\right) \rewto{y}{v} 
 \left(\fvenv{x_j}_\jJ, (x_i \mapsto w_i)_\iI, t\right)}
\end{mathparfig}

\begin{version}{\Or\Long\Medium}
In the \Rule{Subst-Value-Closure}, the notation
$\push{(y \mapsto v)}(x_i \mapsto w_i)_i$ means that the
binding $y \mapsto v$ is added in first position in the vector of
captured values.  The values $w_i$ were computed in a context depending on
$y$, so they need to appear after the binding $y \mapsto v$ for the value to be type-correct.
\end{version}

\begin{lemma}[Value substitution preserves typing]
  \label{lem:value_substitution_preserves_typing}
  If
  $(\Gamma \der v : \rho)$,
  $(\Gamma, y \of \rho \der w : \sigma)$,
  $(\Gamma, y \of \rho \der \sigma
    \rewto{y}{\Psi} \Gamma \der \tau)$ and
  $(w \rewto{y}{v} w')$ hold,
  then $(\Gamma \der w' : \tau)$ holds.
\end{lemma}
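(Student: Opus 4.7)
The plan is to proceed by induction on the derivation of $\Gamma, y \of \rho \der w : \sigma$, performing in each case a simultaneous inversion on the value substitution $w \rewto{y}{v} w'$ and on the type substitution $\Gamma, y \of \rho \der \sigma \rewto{y}{\Psi} \Gamma \der \tau$, whose shapes are dictated by the shape of $w$ and $\sigma$. Each step uses the induction hypothesis on strictly smaller sub-derivations and, where needed, Lemma~\ref{lem:substitution_preserves_scoping} to discharge scoping side-conditions such as $\Gamma \der$ (which also follows from $\Gamma \der v : \rho$).

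The atom and product cases are immediate congruences. For \Rule{Value-Atom} both the value and the type are unchanged, and the conclusion is just another application of \Rule{Value-Atom}. For \Rule{Value-Product} I apply the induction hypothesis componentwise and reassemble with \Rule{Subst-Value-Product} and \Rule{Value-Product}.

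The main case is \Rule{Value-Closure} (and symmetrically \Rule{Value-Closure-Fix}), which splits on whether $y$ appears in the free-variable list of the closure. If $y$ is not listed, \Rule{Subst-Value-Closure-Notin} leaves $w$ unchanged and \Rule{Subst-Closure-Notin} leaves the outer closure type unchanged, so I only need to re-apply \Rule{Value-Closure} in the smaller ambient context $\Gamma$ (the premise $\Gamma,\Gamma_1 \der$ shrinks to $\Gamma \der$ by Lemma~\ref{lem:substitution_preserves_scoping}). If $y$ does appear, the convention that the free-variable list is ordered like the typing context forces $y$ to sit at the end, so \Rule{Subst-Value-Closure} moves $y \mapsto v$ to the front of the captured-value vector. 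At the type level, inverting \Rule{Subst-Closure} on $\sigma$ yields both the updated dependency annotation $\Phi_1 + \chi.\Psi$ on the captured context and a recursive substitution on the body's return type. I would then rebuild a new \Rule{Value-Closure} derivation for $w'$ by reusing the original internal typing of $t$ unchanged, prepending to the captured-value family the new entry $\Gamma \der v : \rho$ given as a hypothesis of the lemma, and extending the internal substitution chain $\rewto{(x_i)}{(\Psi_i)}$ with one more step $\rewto{y}{\Psi}$ extracted from the outer \Rule{Subst-Closure}.

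The main obstacle I anticipate is the annotation bookkeeping in this last subcase: the dependency vector on $\Gamma$ produced by appending the $y$-step to the internal substitution chain must coincide, component by component, with the vector produced by the single outer application of \Rule{Subst-Closure}; and the positional insertion of $y$ at the front of the captured vector must line up with the right-to-left order in which the chain $\rewto{(x_i)}{(\Psi_i)}$ is evaluated in \Rule{Value-Closure}, so that $y$ is genuinely substituted last. Both points are purely algebraic once the notation is unfolded, but this is where the formalism pays its bills. The \Rule{Value-Closure-Fix} case only adds the recursive self-binding $f \of (\tfun{\ldots}{x\of\sigma^\phi}{\tau})^\chi$ to the internal context; the captured values and the outer substitution on $y$ are treated identically.
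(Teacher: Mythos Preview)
Your proposal is correct and follows essentially the same approach as the paper: induction on the value-typing derivation, inversion on the value- and type-substitution judgments in each case, with the closure case split on whether $y$ belongs to the captured context, and the ``in'' subcase handled by prepending $y\mapsto v$ to the captured values and composing the internal substitution chain $\rewto{(x_i)}{(\Psi_i)}$ with the outer step $\rewto{y}{\Psi}$. The paper's proof is slightly terser on the bookkeeping you flag (it simply displays the composed chain and observes that the annotations match), but the content is the same.
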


\begin{version}{\Or\Medium\Long}%TODO try to get into the short version as well
  Note that this theorem is restricted to substitutions of the
  rightmost variable of the context. While substitution in types needs
  to operate under binders (in rule \Rule{Subst-Closure}), value
  substitution is a runtime operation that will only be used by our
  (weak) reduction relation on the last introduced variable.
\end{version}

\begin{version}{\Long}
\begin{proof}
  By induction on the value typing judgment
  $\Gamma, y \of \rho \der w : \sigma$.

  \begin{caselist}
    \nextcase \Rule{Value-Atom}: by inversion we know that the last
    rule of
    ${\Gamma, y \of \rho \der \alpha \rewto{y}{\Psi} \Gamma \der \tau}$
    is \Rule{Subst-Atom}. From the premise $\Gamma, y \of \rho \der$
    and $\Gamma, y \of \rho \rewto{y}{\Psi} \Gamma$ we can deduce from
    Lemma~\ref{lem:substitution_preserves_scoping} that $\Gamma \der$,
    which allows to conclude $\Gamma \der \code{v}_\alpha : \alpha$.

    \nextcase \Rule{Value-Product}: by inversion, the last rules of
    ${\Gamma, y \of \rho \der (\sigma_1 * \sigma_2)
        \rewto{y}{\Psi} \Gamma \der \tau}$
    and $w \rewto{y}{v} w'$ are respectively \Rule{Subst-Product} and
    \Rule{Subst-Value-Product}, so by induction hypothesis we have
    $\Gamma \der w'_1 : \tau_1$ and
    $\Gamma \der w'_2 : \tau_2$, which allows us to conclude
    $\Gamma \der (w'_1, w'_2) : \tau_1 * \tau_2$.

    \nextcase \Rule{Value-Closure}. By inversion we know that the last
    substitution rules applied are either \Rule{Subst-Closure} and
    \Rule{Subst-Value-Closure}, or \Rule{Subst-Closure-Notin} and
    \Rule{Subst-Value-Closure-Notin}, depending on whether the
    substituted variable is part of the closure context.

    In the latter case, both the types and the judgments are
    unchanged, so the result is immediate. If the substituted value is
    part of the closure context, the rules of the involved judgments
    are
    \begin{mathpar}
\inferrule
{\Gamma, y \of \rho \vdash \sigma_1
  \rewto{y}{\Psi} \Gamma \der \sigma_1\\
 \Gamma, y \of \rho, x \of \sigma_1 \der \sigma_2
   \rewto{y}{\Psi} \Gamma, x \of \sigma_1 \der \sigma'_2
}
{\Gamma, y \of \rho \der
   \tfun{\Gamma^{\Phi_1},y \of \rho^\chi}
        {x \of \sigma_1^{\phi}}
        {\sigma_2}
 \rewto{y}{\Psi} \Gamma \der
   \tfun{\Gamma^{\Phi_1+\chi.\Psi}}
        {x \of {\sigma_1}^{\phi}}
        {\sigma'_2}
 }
\and
\inferrule
{\forall \iI,\uad
    \Gamma, y \of \rho, (x_j \of \tau_j)_{j < i} \der w_i : \tau_i\\
 \Gamma^{\Phi''_1},y \of \rho^{\chi''},
   (x_i\of\tau_i^{\psi_i})_\iI, x \of {\sigma_1}^\phi \der t : \sigma''_2\\
 \Gamma^{\Phi''_1},y \of \rho^{\chi''},
   (x_i \of \tau_i^{\psi_i})_\iI, x \of {\sigma_1}^\phi \der \sigma''_2
   \rewto{(x_i)}{(\Psi_i)}
     \Gamma^{\Phi_1},y \of \rho^{\chi},
       x \of {\sigma_1}^\phi \der \sigma_2
}
{\Gamma, y \of \rho \der
  (\dom \Gamma, (x_i \mapsto w_i)_\iI, \lam x t) :
   \tfun
     {\Gamma^{\Phi_1},y \of \rho^{\chi}}
     {x \of {\sigma_1}^\phi}{\sigma_2}}
\and
\inferrule
{}
{\left(\fvenv{\dom \Gamma, y},
       (x_i \mapsto w_i)_\iI,
       t\right)
 \rewto{y}{v} 
 \left(\dom \Gamma
       \push{(y \mapsto v)}(x_i \mapsto w_i)_i,
       t\right)}  
    \end{mathpar}

    We can reach our goal by using the following inference rule:
    \begin{mathpar}
\inferrule
{\forall \iI,\uad
    \Gamma, y \of \rho, (x_j \of \tau_j)_{j < i} \der w_i : \tau_i\\
    \Gamma^{\Phi''_1},
    y : \rho^{\chi''}, (x_i\of{\tau_i}^{\psi_i})_\iI, x \of {\sigma_1}^\phi
   \der  t : \sigma''_2\\
   \Gamma^{\Phi''_1},
     y \of \rho^{\chi''}, (x_i \of \tau_i^{\psi_i})_\iI, x \of {\sigma_1}^\phi
     \der \sigma''_2
   \rewto{\push{y}{(x_i)_i}}{\push{\Psi}{(\Psi_i)_i}}
     \Gamma^{\Phi_1+\chi.\Psi},
       x \of {\sigma_1}^\phi \der \sigma'_2
 }
{\Gamma \der
  \left(\dom \Gamma, \push{(y \mapsto v)}{(x_i \mapsto w_i)_\iI},
    \lam x t\right) :
  \tfun
    {\Gamma^{\Phi_1+\chi.\Psi}}
    {x \of {\sigma_1}^\phi}{\sigma'_2}}
\end{mathpar}

% TODO fix \push notations that produce extraneous commas when used in
% prefix position

% TODO: note that we don't have to consider a weakening context
% \Gamma_1 as in the complete closure typing or type substitution
% judgment, because we assume that value substitution only happens on
% the last variables in the context: if this variable is in the
% closure domain, the weakening context is empty (it has already been
% popped off by previous substitutions).

The typing assumptions all match those of our premise. The reduction
assumption is simply the composition of the reductions of the premises: \[
  \begin{array}{ll}
    &
      \Gamma^{\Phi''_1}, y \of \rho^{\chi''}, (x_i \of \tau_i^{\psi_i})_\iI,
        x \of {\sigma_1}^\phi \der \sigma''_2 \\
    \rewto{(x_i)}{(\Psi_i)} &
      \Gamma^{\Phi_1}, y \of \rho^{\chi}, x \of {\sigma_1}^\phi \der \sigma_2 \\
    \rewto{y}{\Psi} &
      \Gamma^{\Phi_1+\chi.\Psi}, x \of {\sigma_1}^\phi \der \sigma'_2 \\
  \end{array}
  \]
  \end{caselist}
\qed
\end{proof}
\end{version}

% TODO: note that we are here performing reduction on a non-top
% variable; as said before, under binders (here x:\sigma) we may
% type-reduce in depth, but the dynamic value substitution isn't
% involved -- it only works on the last variable.

% \only{\Short}{\vspace{-2ex}}
\subsubsection{The Big-Step Reduction Relation.}

We are now equipped to define in Figure~\ref{fig/reduction} the
big-step reduction relation on well-typed terms $V \der e \redto v$,
where $V$ is a mapping from the variables to values that is assumed to
contain at least all the free variables of $e$. The notation
$w \Rewto{V_2} w'$ denotes the sequence of substitutions for each
(variable, value) pair in $V_2$, from the last one introduced in the
context to the first; the intermediate values are unnamed and
existentially quantified.

\begin{mathparfig}{fig/reduction}{Big-step reduction rules}
\inferrule[Red-Var]
{}
{V \der x \redto V(x)}
\and
\inferrule[\hspace{-0.2cm}Red-Lam]
{}
{\hspace{-0.2cm} V \der \lam x t \redto \! (\dom V, \emptyset, \lam x t) \hspace{-0.9cm}}
\and
\Fixpoint{
  \inferrule[Red-Lam-Fix]
  {}
  {V \der \fix f x t \redto (\dom V, \emptyset, \fix f x t)}
}{}
\and
\inferrule[Red-Pair]
{\! V \der e_1 \redto \! v_1\\
 \! V \der e_2 \redto \! v_2}
{V \der (e_1, e_2) \redto (v_1, v_2)}
\and
\inferrule[Red-Proj]
{V \der e \redto (v_1, v_2)}
{V \der \pi_i(e) \redto v_i}
\and
\inferrule[Red-Let]
  {V \der e_1 \redto v_1\\
   V\push{(x \mapsto v_1)} \der e_2 \redto v_2\\
   v_2 \rewto{x}{v_1} v'_2
  }
  {V \der \clet{x}{e_1}{e_2} \redto v'_2}
\and
\inferrule[Red-App]
{V, V_1 \der t \redto (\dom V, V_2, \lam y t')\\
 V, V_1 \der u \redto v_\code{arg}\\
 V, V_1, V_2, y \mapsto v_\code{arg}
   \der t' \redto w\\
 w \rewto{y}{v_\code{arg}} w' \Rewto{V_2} w''
}
{V, V_1 \der t \app u \redto w''}
%
% TODO: one may need to Kripke-ize the value closure judgment to
% quantify over all context extensions to accept that (V, V_1, V_2)
% here.
%
\Fixpoint{
  \inferrule[Red-App-Fix]
  {V, V_1 \der t \redto (\dom V, (x_i \mapsto v_i)_\iI, \fix f y {t'})\\
   V, V_1 \der u \redto v_\code{arg}\\
   V\push{(x_i \mapsto v_i)_i}
    \push{(f \mapsto \fix f y {t'})}
    \push{(y \mapsto v_{\code{arg}})}
    \der t' \redto w
  }
  {V, V_1 \der t \app u \redto w}
}{}
\end{mathparfig}

We write $V : \Gamma \der$ if the context valuation $V$, mapping free
variables to values, is well-typed according to the context
$\Gamma$. \Short{The definition of this judgment is given in the full
  version.}{}
\begin{version}{\Or\Medium\Long}
  \begin{mathpar} \noabovedisplayskip \nobelowdisplayskip
    \inferrule[Value-Env-Empty] {} {\emptyset : \emptyset \der}
\and
    \inferrule[Value-Env]
    {V : \Gamma \der\\
      \Gamma \der v : \sigma} {V\push{x \mapsto v} : \Gamma, x \of \sigma \der}
  \end{mathpar}
\end{version}

\begin{theorem}[Type soundness]\label{thm:sound}
  If
  $\Gamma^{\Phi} \der t : \sigma$,
  $V : \Gamma \der$ and
  $V \der t \redto v$ 
  then $\Gamma \der v : \sigma$.
\end{theorem}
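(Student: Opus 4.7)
The plan is to proceed by structural induction on the derivation of the big-step reduction $V \der t \redto v$, performing inversion on the typing derivation $\Gamma^{\Phi} \der t : \sigma$ in each case. The environment typing $V : \Gamma \der$ is used as an invariant whose inversion provides types for looked-up variables and for values already stored in closures.

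The base and simple cases are essentially routine. For \Rule{Red-Var}, inversion of $V : \Gamma \der$ at the position of $x$ yields $\Gamma_0 \der V(x) : \sigma$, from which $\Gamma \der V(x) : \sigma$ follows by weakening. For \Rule{Red-Lam}, the returned value $(\dom V, \emptyset, \lam x t)$ is typed via \Rule{Value-Closure} with an empty capture set $(x_i)_{i\in I}$, so the body-typing premise and the (trivial, identity) substitution premise are exactly what inverting \Rule{Lam} provides. The \Rule{Red-Pair} and \Rule{Red-Proj} cases are direct uses of the induction hypothesis combined with \Rule{Value-Product} and its inversion, respectively. The \Rule{Red-Let} case uses the IH to get $\Gamma \der v_1 : \sigma$ and, after extending the environment with $x \mapsto v_1$, $\Gamma, x\of\sigma \der v_2 : \tau$; combining these with the substitution premise $\Gamma, x\of\sigma \der \tau \rewto{x}{\Phidef} \Gamma \der \tau'$ of the \Rule{Let} rule, Lemma~\ref{lem:value_substitution_preserves_typing} applied to $v_2 \rewto{x}{v_1} v'_2$ produces $\Gamma \der v'_2 : \tau'$.

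The real work is in the \Rule{Red-App} case. After inverting \Rule{App}, the IH applied to $t$ produces $\Gamma_0, \Gamma_1 \der (\dom V, V_2, \lam y t') : \tfun{\Gamma_0^\Phiclos}{y\of\sigma^\phi}{\tau}$; inverting \Rule{Value-Closure} exposes the \emph{internal} body typing $\Gamma_0^{\Phi'}, (x_i\of\tau_i^{\psi_i})_\iI, y\of\sigma^\phi \der t' : \tau''$, the typings of the captured values $v_i$, and the substitution chain $\tau'' \rewto{(x_i)}{(\Psi_i)} \tau$. The IH on the argument gives $\Gamma_0, \Gamma_1 \der v_\code{arg} : \sigma$; together with the captured-value typings and Lemma~\ref{lem:typing_preserves_scoping} we check that $V, V_1, V_2, y\mapsto v_\code{arg}$ inhabits $\Gamma_0, \Gamma_1, (x_i\of\tau_i)_\iI, y\of\sigma$ (after weakening the internal body typing with $\Gamma_1$). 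The IH applied to the body reduction then gives $\Gamma_0, \Gamma_1, (x_i\of\tau_i)_\iI, y\of\sigma \der w : \tau''$. Finally, we must iterate Lemma~\ref{lem:value_substitution_preserves_typing} along the operational substitution chain $w \rewto{y}{v_\code{arg}} w' \Rewto{V_2} w''$ to reach $\Gamma_0, \Gamma_1 \der w'' : \tau'$.

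The hard part is aligning the order of substitutions. Operationally, $y$ is substituted first, then the captured $(x_i)$ in LIFO order; but the type-level chain from \Rule{Value-Closure} substitutes the $(x_i)$ first (giving $\tau$), after which the \Rule{App} rule substitutes $y$ with $\Phiarg$ (giving $\tau'$). I will use Corollary~\ref{lem:reordering} repeatedly to commute the $y$-substitution past each $x_i$-substitution; this is well-behaved because $\Psi_i$, being the usage information of values captured \emph{before} $y$ was in scope, cannot mention $y$, and $\Phiarg$, computed in $\Gamma_0, \Gamma_1$, cannot mention any $x_i$, so the annotation corrections produced by the corollary collapse to the expected annotations. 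Once the two substitution orders are synchronized, a straightforward induction along the reordered chain, with Lemma~\ref{lem:value_substitution_preserves_typing} at each step, delivers $\Gamma_0, \Gamma_1 \der w'' : \tau'$ and closes the case.
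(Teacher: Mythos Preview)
Your proposal is correct and follows essentially the same approach as the paper's own proof: induction on the reduction derivation, with the \Rule{Red-App} case handled by inverting \Rule{Value-Closure}, applying the induction hypothesis to the body, and then invoking the reordering corollary (Corollary~\ref{lem:reordering}) to commute the $y$-substitution past the $(x_i)$-substitutions, using exactly the observation that $\Psi_i(y) = \Phiarg(x_i) = 0$ so the annotation corrections vanish. Your treatment is in fact slightly more explicit than the paper's in naming the weakening by $\Gamma_1$ needed to align the body typing with the runtime environment.
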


\begin{version}{\Or\Medium\Long}
\begin{proof}
  By induction on the reduction derivation $V \der t \redto v$.

  \begin{caselist}
    \nextcase \Rule{Red-Var}: From $V : \Gamma \der$ we have
    $\Gamma \der V(x) : \sigma$.

    \nextcase \Rule{Red-Lam}\Fixpoint{, \Rule{Red-Lam-Fix}}{}: in the
    degenerate case where there are no captured values, the value
    typing rule \Rule{Value-Closure} for
    $\tfun{\Gamma^\Phi}{x \of \sigma^\phi}{\tau}$ has as only premise
    $\Gamma^\Phi, x \of \sigma^\phi \der \tau$, which is precisely our
    typing assumption.

    \nextcase \Rule{Red-Pair}, \Rule{Red-Proj}: direct by
    induction.

    \nextcase \Rule{Red-Let}:
The involved derivations are the following:
\begin{mathpar}
  \inferrule
  {V \der e_1 \redto v_1\\
   V\push{x \mapsto v_1} \der e_2 \redto v_2\\
   v_2 \rewto{x}{v_1} v'_2}
  {V \der \clet{x}{e_1}{e_2} \redto v'_2}
\and
  \inferrule
  {\Gamma^\Phidef \der e_1 : \sigma\\
    \Gamma^\Phibody, x \of \sigma^\phi \der e_2 : \tau\\
    \Gamma, x \of \sigma \der \tau \rewto{x}{\Phidef} \Gamma \der \tau'}
  {\Gamma^{\phi.\Phidef+\Phibody} \der
    \clet{x}{e_1}{e_2} : \tau'}
\end{mathpar}
By induction hypothesis we have $\Gamma \der v_1 : \sigma$, from which
we deduce that $V\push{x \mapsto v_1} : \Gamma, x\of\sigma \der$. This
allows us to use induction again to deduce that
$\Gamma, x \of \sigma \der v_2 : \tau$. Finally, preservation of value
typing by value substitution
(Lemma~\ref{lem:value_substitution_preserves_typing}) allows to
conclude that the remaining goal $\Gamma \der v'_2 : \tau'$ holds.

\nextcase \Rule{Red-App}: the involved derivations are the following.
\begin{mathpar}
\inferrule
{V, V_1 \der t \redto (\dom V, (x_i \mapsto v_i)_\iI, \lam y t')\\
 V, V_1 \der u \redto v_\code{arg}\\
 V, V_1 \push{(x_i \mapsto v_i)_i}\push{y \mapsto v_\code{arg}}
   \der t' \redto w\\
 w \rewto{y}{v_\code{arg}} w' \rewto{(x_i)}{(v_i)} w''
}
{V, V_1 \der t \app u \redto w''}
\and
\inferrule
{(\Gamma, \Gamma_1)^\Phifun \der
    t :\tfun{\Gamma^\Phiclos}{y \of \sigma^\phi}{\tau}\\
 (\Gamma, \Gamma_1)^\Phiarg \der u : \sigma\\
 \Gamma, y \of \sigma \der \tau \rewto{y}{\Phiarg} \Gamma \der \tau'}
{(\Gamma, \Gamma_1)^{\Phifun+\Phiclos+\phi.\Phiarg} \der
    t \app u : \tau'}
\end{mathpar}
By induction we have that
$\Gamma, \Gamma_1 \der v_\code{arg} : \sigma$ and
$\Gamma \der ((x_i \mapsto v_i)_i, \lam y {t'}) :
  \tfun{\Gamma^\Phiclos}{y \of \sigma^\phi}{\tau}$.

By inversion, we know that the derivation for this value typing
judgment is of the form \[
\inferrule
{\Gamma, \Gamma_1 \der\\
 \forall \iI,\uad
    \Gamma, (x_j \of \tau_j)_{j < i} \der v_i : \tau_i\\
 \Gamma^\Phi, (x_i\of\tau_i^{\psi_i})_\iI, y \of {\sigma}^\phi \der t' : \tau''\\
 \Gamma^\Phi, (x_i \of \tau_i^{\psi_i})_\iI, y \of {\sigma}^\phi \der \tau''
   \rewto{(x_i)}{(\Psi_i)} \Gamma^{\Phiclos}, y \of \sigma^\phi \der \tau
}
{\Gamma, \Gamma_1 \der (\dom \Gamma, (x_i \mapsto v_i)_\iI, \lam y t') :
   \tfun{\Gamma^{\Phiclos}}{y \of \sigma^\phi}{\tau}}
\]

From our result $\Gamma, \Gamma_1 \der v_{\code{arg}}$ and the premises $\Gamma,
(x_j\of\tau_j)_{j<i} \der v_i : \tau_i$ we deduce that the application
valuation is well-typed with respect to the application environment:
$V, V_1\push{(x_i \mapsto v_i)_i}\push{y \mapsto v_\code{arg}} : \Gamma, \Gamma_1, (x_i\of\tau_i),\sigma \der$.
It is used in the premise of the body computation judgment, so by induction we get that
$\Gamma, \Gamma_1, (x_i\of\tau_i), y\of\sigma \der w : \tau''$.

From there, we wish to use preservation of value typing on the chain
of value substitutions $
  w \rewto{y}{v_{\code{arg}}} w' \rewto{(x_i)}{(v_i)} w''
$ to conclude that $\Gamma \der w'' : \tau'$. 
However, the type substitutions of our premises are in the reverse
order:
\[
\Gamma^\Phi, (x_i \of \tau_i^{\psi_i})_\iI, y \of {\sigma}^\phi \der \tau''
   \rewto{(x_i)}{(\Psi_i)} \Gamma^{\Phiclos}, y \of \sigma^\phi \der \tau
   \rewto{y}{\Phiarg} \Gamma^{\Phiclos + \phi.\Phiarg} \der \tau'
\]
we therefore need to use our Reordering Lemma~(\ref{lem:reordering})
to get them in the right order --- note that the annotations $(\Psi_i)_\iI$ and
$\Phiarg$ are not changed as they are independent from each other: for
any $i$, we have $\Psi_i(y) = \Phiarg(x_i) = 0$.
\[
\Gamma^\Phi, (x_i \of \tau_i^{\psi_i})_\iI, y \of {\sigma}^\phi \der \tau''
   \rewto{y}{\Phiarg} \Gamma^{\Phipclos}, (x_i \of \tau_i^{\psi_i})_\iI \der \tau
   \rewto{(x_i)}{(\Psi_i)} \Gamma^{\Phiclos + \phi.\Phiarg} \der \tau'
\]

\end{caselist}
\qed
\end{proof}
\end{version}

\begin{mathparfig}{fig/classic-reduction}{Classic big-step reduction rules}
\begin{version}{\Or\Medium\Long}
\inferrule[Classic-Red-Var]
{}
{W \der x \credto W(x)}
\end{version}
\and
\inferrule[Classic-Red-Lam]
{}
{W \der \lam x t \credto (W, \lam x t)}

\begin{version}{\Or\Medium\Long}
\Fixpoint{
  \inferrule[Classic-Red-Lam-Fix]
  {}
  {W \der \fix f x t \credto (W, \fix f x t)}
}{}
\and
\inferrule[Classic-Red-Pair]
{W \der e_1 \credto w_1\\
 W \der e_2 \credto w_2}
{W \der (e_1, e_2) \credto (w_1, w_2)}
\and
\inferrule[Classic-Red-Proj]
{W \der e \credto (w_1, w_2)}
{W \der \pi_i(e) \credto w_i}
\end{version}
\and
\inferrule[Classic-Red-Let]
  {W \der e_1 \credto w_1\\
   W\push{x \mapsto w_1} \der e_2 \credto w_2
  }
  {W \der \clet{x}{e_1}{e_2} \credto w_2}
\begin{version}{\Or\Medium\Long}
\\
\end{version}
\and
\inferrule[Classic-Red-App]
{W \der t \credto (W', \lam y t')\\
 W \der u \credto w_\code{arg}\\
 W'\push{y \mapsto w_\code{arg}} \der t' \credto w
}
{W \der t \app u \credto w}
\and
\begin{version}{\Or\Medium\Long}
\Fixpoint{
  \inferrule[Classic-Red-App-Fix]
  {W \der t \credto (W', \fix f y {t'})\\
    W \der u \credto w_\code{arg}\\
    W'\push{f \mapsto (W', \fix f y {t'})}
      \push{y \mapsto w_{\code{arg}}}
    \der t' \credto w
  }
  {W \der t \app u \credto w}
}{}
\end{version}
\end{mathparfig}

Finally, we recall the usual big-step semantics for the call-by-value
calculus with environments, in Figure~\ref{fig/classic-reduction}, and
state its equivalence with our utilitarian semantics. Due to space
restriction we will only mention the rules that differ, and elide the
equivalence proof, but the long version contains all the details.

There is a close correspondence between judgments of both semantics,
but as the value differ slightly, in the general cases the value
bindings of the environment will also differ. We state the theorem
only for closed terms, but the proof will proceed by induction on
a stronger induction hypothesis using an equivalence between non-empty
contexts.

\begin{theorem}[Semantic equivalence]\label{thm:sem-equiv}
  Our reduction relation is equivalent with the classic one on closed
  terms: $\emptyset \der t \redto v$ holds if and only if
  $\emptyset \der t \credto v$ also holds.
\end{theorem}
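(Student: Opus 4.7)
The plan is to strengthen the statement from closed terms to open ones. For closed terms the values produced by the two semantics literally coincide, since a closed lambda has no free variables, so the open-slot list $\fvenv{x_j}_\jJ$ in a new-style closure is empty and such a closure matches the classical $((x_i \mapsto w_i)_\iI, \lam x t)$ whenever each captured $v_i$ corresponds to the classical $w_i$. The theorem then follows from a stronger induction hypothesis over reductions in non-empty, equivalent environments.

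First I would introduce a value equivalence $v \sim_V w$ between a new-style value $v$, a classical value $w$, and an ambient new-style environment $V$. It is defined by structural recursion: atoms match syntactically, pairs match componentwise, and a new-style closure $(\fvenv{x_j}_\jJ, (x_i \mapsto v_i)_\iI, \lam x t)$ is related to a classical closure $(W', \lam x t)$ exactly when $\dom W' = \{x_j\}_\jJ \cup \{x_i\}_\iI$, each open slot $x_j$ is resolved against the ambient environment via $V(x_j) \sim_{V_j} W'(x_j)$, and each captured slot $x_i$ satisfies $v_i \sim_{V_i} W'(x_i)$, where $V_j$ and $V_i$ are the appropriate prefix environments. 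Lifting this pointwise gives an environment relation $V \sim W$, and the stronger statement to prove is: if $V \sim W$, then $V \der t \redto v$ iff $W \der t \credto w$ for some $w$ with $v \sim_V w$.

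The crucial auxiliary lemma states that value substitution realises the expected reindexing of the equivalence: if $v \sim_{V \push{x \mapsto v_1}} w$ and $v \rewto{x}{v_1} v'$, then $v' \sim_V w$. This is proved by induction on $v$; the only interesting case is a closure, where \Rule{Subst-Value-Closure} moves $x$ from the open slots into the captured slots, which exactly corresponds on the equivalence side to resolving that slot against $v_1$ directly rather than against $V \push{x \mapsto v_1}$. The iterated variant, for a sequence $w \Rewto{V_2} w''$, then follows by a simple chain.

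With this equipment the main induction handles each pair of reduction rules. Variables, pairs, projections, and lambdas are routine: in particular, \Rule{Red-Lam} produces $(\dom V, \emptyset, \lam x t)$, which is $\sim_V$-related to the classical $(W, \lam x t)$ precisely when $V \sim W$. The \Rule{Red-Let} case combines two uses of the induction hypothesis with the substitution lemma to re-index the final value from $V \push{x \mapsto v_1}$ back to $V$. The principal obstacle is the \Rule{Red-App} case: the new semantics reduces the body $t'$ under $V, V_1, V_2, y \mapsto v_\code{arg}$ and then iteratively substitutes $y$ and all of $V_2$ away, whereas the classical semantics reduces $t'$ directly under $W' \push{y \mapsto w_\code{arg}}$. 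The closure equivalence tells us that $W'$ already amalgamates the open slots (resolved against $V$, with $V_1$ irrelevant since $t'$ was created in scope $V$ only) with the captured $V_2$, so the two body-reduction environments are $\sim$-equivalent up to these irrelevant extra bindings; the induction hypothesis then yields a body value equivalent to the classical $w$ under the extended index, and the iterated substitution lemma brings it back to equivalence under $V, V_1$, closing the case.
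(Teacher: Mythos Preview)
Your proposal is essentially the same approach as the paper's. The paper also strengthens the statement to open terms via a four-place equivalence judgment $V \der v = W \cder w$ (Figure~\ref{fig/equiv}), whose closure case requires $V\push{x_i \mapsto v_i} \der{} = W' \cder{}$---this is your $\sim_V$ with the open slots implicitly taken to be $\dom V$. The paper proves the same auxiliary lemma (value substitution preserves value equivalence) and then the same case analysis on reduction rules; your handling of \Rule{Red-Let} and \Rule{Red-App} matches the paper's almost step for step. The only cosmetic differences are that you phrase the strengthened statement as a bi-implication (existence of a reduction on one side iff on the other), whereas the paper assumes both reductions and concludes equivalence of the results, and that you are slightly more explicit about the free-variable list $\fvenv{x_j}_\jJ$ possibly being a proper prefix of the ambient $V$ in the application case.
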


  To formulate our induction hypothesis, we define the equivalence
  judgment $V \der v = W \cder w$; on each side of the equal sign
  there is a context and a value, the right-hand side being considered
  in the classical semantics.

  \begin{mathparfig}{fig/equiv}{Equivalence of semantic judgements.}
    \inferrule
      {}
      {\emptyset \der{} = \emptyset \cder{} }
\and
    \inferrule
      { V \der{} \! = W \cder{}
      \\ V \der v = W \cder w  }
      {V\push{x \mapsto v} \der{} = W\push{x \mapsto w} \cder{}}
\and
    \inferrule
      {V \der{} = W \cder{}}
      {V \der \code{v}_\alpha = W \cder \code{v}_\alpha }
\and
    \inferrule
      {V \der v_1 = W \cder w_1\\ \!\!\!
       V \der v_2 = W \cder w_2}
      {V \der (v_1, v_2) = W \cder (w_1, w_2)}
\and
    \inferrule
      {V \der{} = W \cder{}\\ 
       V\push{x_i \mapsto v_i} \der{} = W' \cder{}}
      {V \der ((x_i \mapsto v_i)_\iI, \lam x t) =
       W \cder (W', \lam x t)}
  \end{mathparfig}

The stronger version of the theorem becomes the following: if
$V \der{} = W \cder{}$ and $V \der t \redto v$ and
$W \der t \credto w$, then $V \der v = W \cder w$.

\begin{version}{\Long}  
  As for subject reduction, we first need to prove that value
  substitution preserves value equivalence.
  
  \begin{lemma}[Value substitution preserves value equivalence]
    If $V\push{y \mapsto v_0} \der v = W \cder w$,
    $V \der v_0 = W \cder w_0$,
    and $v \rewto{y}{v_0} v'$
    then $V \der v' = W \cder w$.
  \end{lemma}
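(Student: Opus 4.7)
The plan is to proceed by induction on the derivation of the value equivalence $V\push{y \mapsto v_0} \der v = W \cder w$. Because the equivalence rules in Figure~\ref{fig/equiv} are syntax-directed on the structure of $v$, this amounts to a case analysis on $v$, with cases for atoms, pairs, and closures. I will also implicitly use that context equivalence carries matching domains on left and right, so the outer $W$ actually has the form $W_0\push{y \mapsto w_0}$ for some $w_0$ that corresponds to $v_0$ (as recorded by the hypothesis $V \der v_0 = W \cder w_0$).

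For atoms and pairs the arguments are routine: \Rule{Subst-Value-Atom} leaves the value unchanged and the target follows from inverting the context equivalence one step, while \Rule{Subst-Value-Product} together with inversion on the pair equivalence reduces to two applications of the induction hypothesis. The heart of the proof is the closure case, where the equivalence gives $v = ((x_i \mapsto v_i)_\iI, \lam x t)$, $w = (W', \lam x t)$, and two premises: context equivalence at $V\push{y \mapsto v_0}$, and $V\push{y \mapsto v_0}\push{(x_i \mapsto v_i)_\iI} \der{} = W' \cder{}$. When the value substitution fires \Rule{Subst-Value-Closure}, the resulting $v'$ promotes $y$ from the unsubstituted list into the captures in first position, so the target closure equivalence $V \der v' = W \cder w$ needs two premises that both fall out of what we have: the context equivalence at $V$ is obtained by inverting the one at $V\push{y \mapsto v_0}$ (using $V \der v_0 = W \cder w_0$ to rebuild it cleanly), and the extended premise over $V$ with both $(y \mapsto v_0)$ and $(x_i \mapsto v_i)_\iI$ is literally the second original premise.

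The main obstacle is the other closure subcase, \Rule{Subst-Value-Closure-Notin}, where the substituted variable $y$ was not in the closure's unsubstituted list because the closure was built before $y$ entered scope. Here $v' = v$, and the goal $V \der v = W \cder w$ becomes a strengthening of the given equivalence by one binding on each side. This strengthening is semantically valid, since neither $v$ nor $w$ can depend on $y$ (both were created without $y$ in scope), but to establish it formally I would invoke a small auxiliary strengthening lemma for the equivalence judgment: if the outermost binding on each side is absent from the captures of the value, it can be removed. That lemma is itself proved by a straightforward induction on the equivalence derivation, with the closure case using the hypothesis $y \notin (x_j)_\jJ$ of \Rule{Subst-Value-Closure-Notin} to ensure the classical captured environment $W'$ does not reference $y$ either.
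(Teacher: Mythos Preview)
Your approach is essentially the same as the paper's: a case analysis on the shape of $v$ (equivalently, on the value-substitution derivation), with atom and product cases being routine and the real content in the closure case. Your treatment of the \Rule{Subst-Value-Closure} case matches the paper's almost exactly: from the premise $V\push{y \mapsto v_0}\push{(x_i \mapsto v_i)_\iI} \der{} = W' \cder{}$ you directly obtain the extended-captures premise of the goal, and the outer context premise $V \der{} = W \cder{}$ comes from the hypothesis $V \der v_0 = W \cder w_0$.

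Where you diverge is in treating \Rule{Subst-Value-Closure-Notin} as ``the main obstacle'' and introducing an auxiliary strengthening lemma. The paper simply omits this case, and it can: under the closure-equivalence rule of Figure~\ref{fig/equiv}, a closure equated in context $V\push{y \mapsto v_0}$ implicitly has its free-variable list equal to $\dom(V\push{y \mapsto v_0})$ --- the premise $V\push{y \mapsto v_0}\push{(x_i \mapsto v_i)_\iI} \der{} = W' \cder{}$ forces this domain alignment with the classical closure's captured environment $W'$. Hence $y$ is always the last entry of the free-variable list, \Rule{Subst-Value-Closure} fires, and the Notin rule cannot apply. Your strengthening argument is not wrong, but it is unnecessary once you read the equivalence rule with this implicit domain constraint (which the paper uses throughout, e.g.\ in the \Rule{Red-Lam} case of the semantic-equivalence proof).
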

  \begin{proof}
    \nextcase \Rule{Subst-Value-Atom}: direct.
    
    \nextcase \Rule{Subst-Value-Product}: direct induction.

    \nextcase \Rule{Subst-Value-Closure}: We have
    $((x_i \mapsto v_i)_\iI, \lam x t) \rewto{y}{v_0} (\push{y \mapsto v_0}(x_i \mapsto v_i)_\iI, \lam x t)$
    and
    $ V\push{y \mapsto v_0} \der ((x_i \mapsto v_i)_\iI, \lam x t) = W \der (W', \lam x t)$. The
    latter implies
    $V\push{y \mapsto v_0}\push{x_i \mapsto v_i} \der{} = W' \cder{}$
    which which in turn implies our goal
    $V \der (\push{y \mapsto v_0})$ -- with the additional premise
    $V \der{} = W \cder{}$ from our hypothesis
    $V \der v_0 = W \cder w_0$.
  \qed
  \end{proof}

  We can now prove the theorem proper:
  \begin{proof}
    \nextcase \Rule{Red-Var}, \Rule{Classic-Red-Var}: we check by
    direct induction on the contexts that if $V \der{} = W \cder{}$
    holds, then $V \der V(x) = W \cder W(x)$.

    \nextcase \Rule{Red-Lam}, \Rule{Classic-Red-Lam}: The
    correspondence between $V \der (\emptyset, \lam x t)$ and
    $W \cder (W, \lam x t)$ under assumption $V \der{} = W \cder{}$
    is direct from the inference rule
    \[
    \inferrule
      {V\push{\emptyset} \der{} = W \cder{}}
      {V \der (\emptyset, \lam x t) =
       W \cder (W, \lam x t)}
     \]

    \nextcase \Rule{Red-Pair}, \Rule{Classic-Red-Pair} and
    \Rule{Red-Proj}, \Rule{Classic-Red-Proj}: direct by induction.

    \nextcase \Rule{Red-Let}, \Rule{Classic-Red-Let}: By induction
    hypothesis on the $e_1$ premise, we deduce that
    $V \der v_1 = W \cder w_1$, hence
    $V\push{x \mapsto v_1} \der{} = W\push{x \mapsto w_1} \cder{}$.
    This lets us deduce, again by induction hypothesis, that
    $V\push{x \mapsto v_1} \der v_2 = W\push{x \mapsto w_1} \cder w_2$.
    As value substitution preserves value equivalence, we can deduce
    from $v_2 \rewto{x}{v_1} v'_2$ that
    our goal $V \der v'_2 = W \cder w_2$ holds.

    \nextcase \Rule{Red-App}, \Rule{Classic-Red-App}: By induction
    hypothesis we have that
    $V \der v_{\code{arg}} = W \der w_{\code{arg}}$ and
    $V \der ((x_i \mapsto v_i)_\iI, \lam y {t'}) = W \cder (W', \lam y {t'})$.
    By inversion on the value equivalence judgment of the latter we
    know that  $V\push{x_i \mapsto v_i} \der{} = W' \cder{}$. Combined
    with the former value equivalence, this gives us
    $V\push{x_i \mapsto v_i}\push{y \mapsto v_{\code{arg}}} \der{} =
     W'\push{y \mapsto w_{\code{arg}}} \cder{}$,
    so by induction hypothesis and preservation of value equivalence
    by reduction we can deduce our goal.
  \qed
  \end{proof}
\end{version}

% \only{\Short}{\vspace{-2ex}}
\section{Dependency Information as Non-Interference}

We can formulate our dependency information as
a \emph{non-interference} property. Two valuations $V$ and $V'$ are
$\Phi$-equivalent, noted $V =_\Phi V'$, if they agree on all variables
on which they depend according to $\Phi$. We say that $e$ respects
non-interference for $\Phi$ when, whenever $V \der e \redto v$ holds,
then for any $V'$ such that $V =_\Phi V'$ we have that
$V' \der e \redto v$ also holds. This corresponds to the
information-flow security idea that variables marked $1$ are
low-security, while variables marked $0$ are high-security and should
not influence the output result.

This non-interference statement requires that the two
evaluations of $e$ return the same value $v$. This raises the question
of what is the right notion of equality on values. Values of atomic
types have a well-defined equality, but picking the right notion of
equality for function types is more difficult. While we can state
a non-interference result on atomic values only, the inductive
subcases would need to handle higher-order cases as well.

Syntactic equality (even modulo $\alpha$-equivalence) is not the right
notion of equality for closure values. Consider the following example:
$x\of\tau^0 \der \clet{y}{x}{\lam z z} :
\tfun{x\of\tau^0}{z:\sigma^1}{\sigma}$. This term contains an occurrence of the
variable $x$, but its result does not depend on it. However,
evaluating it under two different contexts $x \of v$ and $x \of v'$,
with $v \neq v'$, returns distinct closures: $(x \mapsto v, \lam z z)$ on one
hand, and $(x \mapsto v', \lam z z)$ on the other. These closures are
not structurally equal, but their difference is not essential since
they are indistinguishable in any context. Logical relations are the
common technique to ignore those internal differences and get a more
observational equality on functional values. They involve, however, a
fair amount of metatheoretical effort \Fixpoint{(in particular in
  presence of non-terminating fixpoints)}{} that we would like to
avoid.

Consider a different example:
$x\of\tau^0 \der \lam y x : \tfun{x \of \tau^1}{y \of \sigma^0}{\tau}$. Again,
we could use two contexts $x \of v$ and $x \of v'$ with $v \neq v'$, and we
would get as a result two closures: %TODO detail computation
$x \of v \der \lam y x \redto (x \mapsto v, \lam y x)$ and
$x \of v' \der \lam y x \redto (x \mapsto v', \lam y x)$.
Interestingly, these two closures are \emph{not} equivalent under all
contexts: any context applying the function will be able to observe
the different results. However, our notion of interference requires
that they can be considered equal.  This is motivated by real-world programming
languages that only output a pointer to a closure in a program that
returns a function.

While the aforementioned closures are not equal in any context, they are in
fact equivalent from the point of view of the particular dependency
annotation for which we study non-interference, namely
$x \of \tau^0$. To observe the difference between those closures, we
would need to apply the closure of type
$\tfun{x\of\tau^1}{y:\sigma}{\tau}$, so would be in the different context
$x \of \tau^1$.

This insight leads us to our formulation of value equivalence in
Figure~\ref{fig/value-equiv}. Instead of being as modular and general
as a logical-relation definition, we fix a \emph{global dependency}
$\Phi_0$ that restricts which terms can be used to differentiate
values.
\begin{mathparfig}{fig/value-equiv}{Value equivalence}
  \inferrule[Equiv-Atom]
     {}
     {\Gamma \der \code{v}_\alpha =_{\Phi_0} \code{v}_\alpha : \alpha}
\and
  \inferrule[Equiv-Pair]
     {\Gamma \der v_1 =_{\Phi_0} v'_1 : \sigma_1\\
      \Gamma \der v_2 =_{\Phi_0} v'_2 : \sigma_2}
     {\Gamma \der (v_1,v_2) =_{\Phi_0} (v'_1,v'_2) : \sigma_1 * \sigma_2}
\and
  \inferrule[Equiv-Closure]
     {\forall \iI,\uad
       \Gamma, (x_j \of \tau_j)_{j < i} \der v_i : \tau_i\\
      \Gamma^\Phi, (x_i\of\tau_i^{\psi_i})_\iI, x \of \sigma^\phi \der t : \tau\\
      \Gamma^\Phi, (x_i \of \tau_i^{\psi_i})_\iI, x \of \sigma^\phi \der \tau
      \rewto{(x_i)}{(\Psi_i)} \Gamma^\Phip, x \of \sigma^\phi \der \tau'\\
      \forall \iI, \Psi_i \subseteq \Phi_0 \implies v_i =_{\Phi_0} v'_i}
     {\Gamma \der ((x_i \mapsto v_i)_\iI ,\lam y t)
       =_{\Phi_0}
       ((x_i \mapsto v'_i)_\iI,\lam y t) :
       \tfun{\Gamma^\Phip}{x \of \sigma}{\tau'}}
\end{mathparfig}

Our notion of value equivalence,
$\Gamma \der v =_{\Phi_0} v' : \sigma$ is typed and includes
structural equality. In the rule \Rule{Equiv-Closure}, we check that
the two closures values are well-typed, and only compare captured
values whose dependencies are included in those of the global context
$\Phi_0$, as we know that the others will not be used. % Note that this
% is an extremely restrictive notion of function equality, as we only
% compare closures with the exact same code coming from the same context
% (binding the same set of variables). 
This equality is tailored to the need of
the non-interference result, which only compares values resulting from
the evaluation of the same subterm -- in distinct contexts.

% \only{\Short}{\vspace{-1ex}}
\begin{theorem}[Non-interference]\label{thm:non-inter}
  If $\Gamma^{\Phi_0} \der e : \sigma$ holds, then for any contexts
  $V,V'$ such that $V =_{\Phi_0} V'$ and values $v,v'$ such that
  $V \der e \redto v$ and $V' \der e \redto v'$, we have
  $\Gamma \der v =_{\Phi_0} v' : \sigma$. In particular, if $\sigma$
  is an atomic type, then $v = v'$ holds.
\end{theorem}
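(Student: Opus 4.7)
The plan is to proceed by induction on the typing derivation $\Gamma^{\Phi_0} \der e : \sigma$, but first to strengthen the statement so that the $\Phi_0$ parameter of $=_{\Phi_0}$ stays fixed as we recurse into subterms whose local dependency annotations are smaller. Precisely, I would prove: if $\Gamma^\Phi \der e : \sigma$ with $\Phi \subseteq \Phi_0$, $V =_{\Phi_0} V'$, $V \der e \redto v$, and $V' \der e \redto v'$, then $\Gamma \der v =_{\Phi_0} v' : \sigma$. The theorem is the instance $\Phi = \Phi_0$, and Theorem~\ref{thm:sound} supplies the well-typedness side-conditions needed by \Rule{Equiv-Closure}.

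Before running the induction I would establish an auxiliary substitution lemma, analogous to Lemma~\ref{lem:value_substitution_preserves_typing}: if $\Gamma, y\of\rho \der v =_{\Phi_0} w : \sigma$ and $v \rewto{y}{v_0} v'$ and $w \rewto{y}{w_0} w'$, and moreover $v_0 =_{\Phi_0} w_0$ whenever the closure-level annotation for $y$ lies inside $\Phi_0$, then $\Gamma \der v' =_{\Phi_0} w' : \tau$, where $\tau$ is the type obtained by substitution. The proof is a direct induction on the value equivalence derivation, with the \Rule{Subst-Value-Closure-Notin} case trivial and the \Rule{Subst-Value-Closure} case discharged by the captured-value side condition of \Rule{Equiv-Closure} (which precisely asks for equivalence exactly when $\Psi_i \subseteq \Phi_0$).

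With the lemma in hand, the easy cases of the main induction fall out. For \Rule{Var}, the result is immediate from the definition of $V =_{\Phi_0} V'$. For \Rule{Lam}, both reductions yield closures with empty capture maps, so \Rule{Equiv-Closure} applies with $I = \emptyset$ and no equivalence obligations on captures. For \Rule{Product} and \Rule{Proj}, the induction is direct. For \Rule{Let}, I apply the IH to $e_1$ (its dependency $\Phidef$ is contained in $\Phi_0$ when $\phi = 1$; otherwise the substituted value is irrelevant at $\Phi_0$) and to $e_2$ in the extended valuation, then close with the substitution lemma at $x$.

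The main obstacle is \Rule{App}. Applying the IH to $t$ gives two closure values that are $=_{\Phi_0}$-equivalent; by inversion on \Rule{Equiv-Closure} they share a body $\lam y {t'}$, their internal dependency structure $(\psi_i), \Phi$, and their captures agree wherever $\Psi_i \subseteq \Phi_0$. The IH on $u$ gives either $v_\code{arg} =_{\Phi_0} v'_\code{arg}$ (when $\phi.\Phiarg \subseteq \Phi_0$) or else the argument is uniformly irrelevant. I then apply the strengthened IH to $t'$ using the enlarged $\Phi_0$ that subsumes the captured context's dependencies. Finally, the chain $w \rewto{y}{v_\code{arg}} w' \Rewto{V_2} w''$ is discharged by iterated use of the substitution lemma; the delicate bookkeeping is that the runtime substitution order is the reverse of the type-level substitution order embedded in \Rule{Value-Closure}, which is exactly what Corollary~\ref{lem:reordering} is meant to reconcile. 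The atomic-type corollary follows because \Rule{Equiv-Atom} forces syntactic equality of underlying atomic values.
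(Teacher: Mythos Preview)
Your proposal is close to the paper's argument in its overall shape (a strengthened induction hypothesis, an auxiliary lemma that value substitution preserves $=_{\Phi_0}$, and the same case analysis), but the particular strengthening you choose does not quite go through, and the symptom is exactly your phrase ``using the enlarged $\Phi_0$''.

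You strengthen to ``$\Phi \subseteq \Phi_0$ and $V =_{\Phi_0} V'$'', intending $\Phi_0$ to stay fixed. In the \Rule{App} case, however, the closure body $t'$ is typed in a strictly larger context $\Gamma, (x_i \of \tau_i)_\iI, y \of \sigma$ with annotation $(\Phi,(\psi_i)_\iI,\phi)$; the fixed $\Phi_0$ has no components on the new variables, so neither ``$(\Phi,(\psi_i),\phi) \subseteq \Phi_0$'' nor ``$V_\text{ext} =_{\Phi_0} V'_\text{ext}$'' is even well-formed. If instead you enlarge $\Phi_0$ to some $\Phi_0'$ over the extended domain, you contradict what you just promised, and the recursive call now yields $w =_{\Phi_0'} w'$, which is a different relation from the $=_{\Phi_0}$ you must return (the closure clause of $=_{\Phi_0}$ tests $\Psi_i \subseteq \Phi_0$, so it is sensitive to both the domain and the values of $\Phi_0$). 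Worse, the facts you extract from \Rule{Equiv-Closure} are of the form $\Psi_i \subseteq \Phi_0 \Rightarrow v_i =_{\Phi_0} v'_i$, not the $=_{\Phi_0'}$-statements you would need to establish the premise of the enlarged recursive call.

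The paper's resolution is to keep $\Phi_0$ genuinely global and fixed, and instead let the agreement condition vary with the \emph{local} annotation: the induction hypothesis is that for any subderivation $\Gamma^\Phi \der e : \sigma$ (no constraint $\Phi \subseteq \Phi_0$), if $V$ and $V'$ agree up to $\Phi_0$-value-equivalence on the variables marked $1$ by $\Phi$ (written $V =_\Phi V'$), then $v =_{\Phi_0} v'$. This scales to extended contexts automatically, since $\Phi$ is just whatever the subderivation's own annotation is. In \Rule{App} one recurses on $t'$ with the internal annotation $(\Phi,(\psi_i),\phi)$, and the premise $V_\text{ext} =_{(\Phi,(\psi_i),\phi)} V'_\text{ext}$ is obtained from the outer hypothesis on the $\Gamma$-part together with the captured-value conditions read off \Rule{Equiv-Closure}. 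With this reformulated hypothesis, your substitution lemma and your use of the reordering corollary are on target.
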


\begin{version}{\Or\Medium\Long}
\begin{proof}
  We will proceed by simultaneous induction on the typing derivation
  $\Gamma \der e : \sigma$ and reduction derivation
  $V \der e \redto v$ and . Note that we use a different induction
  hypothesis: for any subderivations $\Gamma^\Phi \der e : \sigma$ and
  $V \der t \redto v$, we will prove that for any $V'$ that agrees
  with $V$ on $\Phi$ modulo $\Phi_0$-equivalence
  ($\forall x, V(x) =_{\Phi_0} V'(x)$, which we still note
  $V =_\Phi V'$), and with $V' \der t \redto v'$, we have
  $v =_{\Phi_0} v'$.

  We will omit the contexts and types $\Gamma, \sigma$ of a value
  equivalence $\Gamma \der v =_{\Phi_0} v' : \sigma$ when they are
  clear from the context.

  \begin{caselist}
    \nextcase \Rule{Red-Var}: from $V =_{0,x:1,0} V'$ we have
    $V(x) =_{\Phi_0} V'(x)$.

    \nextcase \Rule{Red-Lam}\Fixpoint{, \Rule{Red-Lam-Fix}}{}
    : the returned value
    does not depend on the environment.

    \nextcase \Rule{Red-Pair}, \Rule{Red-Proj}: direct by induction.

    \nextcase \Rule{Red-Let}: the involved derivations are the form
    \begin{mathpar}
    \inferrule
    {\Gamma^\Phidef \der e_1 : \sigma\\
     \Gamma^\Phibody, x\of\sigma^\phi \der e_2 : \tau}
    {\Gamma^{\Phibody + \phi.\Phidef} \der \clet{x}{e_1}{e_2} : \tau}
\and
    \inferrule
    {V \der e_1 \redto v_1\\
      V\push{x \mapsto v_1} \der e_2 \redto v_2\\
      v_2 \rewto{x}{v_1} v_3
    }
    {V \der \clet{x}{e_1}{e_2} \redto v_3}
\and
    \inferrule
    {V' \der e_1 \redto v'_1\\
      V'\push{x \mapsto v'_1} \der e_2 \redto v'_2\\
      v'_2 \rewto{x}{v'_1} v'_3
    }
    {V' \der \clet{x}{e_1}{e_2} \redto v'_3}
    \end{mathpar}

    The context equivalence $V =_{\Phibody + \phi.\Phidef} V'$ implies
    the weaker equivalence $V =_\Phibody V'$, from which we can deduce
    $V\push{x \mapsto v_1} =_{(\Phibody,\phi)} V'\push{x \mapsto v'_1}$
    regardless of the value of $\phi$. Indeed, if $\phi$ is $0$ this
    is direct, and if $\phi$ is $1$ we get $v_1 =_{\Phi_0} v'_1$ by
    induction hypothesis. From this equality between contexts we get
    $v_2 =_{\Phi_0} v'_2$ by induction hypothesis.
   
    We then reason by case distinction on the property
    $\Phidef \subseteq \Phi_0$. If it holds, then again
    $v_1 =_{\Phi_0} v'_1$ by induction, so substituting $v_1,v'_1$ in
    the closures of $v_2,v'_2$ will preserve $\Phi_0$-equivalence. And
    if it does not, those values $v_1,v'_1$ captured in the closures
    of $v_2,v'_2$ will not be tested for $\Phi_0$-equivalence. In any
    case, we have $v_3 =_{\Phi_0} v'_3$.

    \nextcase \Rule{Red-App}: the proof for the application case uses
    the same mechanisms as for the \Rule{Red-Let} case but is more
    tedious because of the repeated application and substitution of
    the closed-over values. To simplify notations, we will handle the
    case of a single captured value $x \mapsto v$. The involved
    derivations are the following:

\begin{mathpar}      
\inferrule
{\Gamma^\Phifun \der
    t :\tfun{\Gamma^\Phiclos}{y \of \sigma^\phi}{\tau}\\
 \Gamma^\Phiarg \der u : \sigma\\
 \Gamma, y \of \sigma \der \tau \rewto{y}{\Phiarg} \Gamma \der \tau'}
{\Gamma^{\Phifun+\Phiclos+\phi.\Phiarg} \der
    t \app u : \tau'}
\and
\inferrule
{V \der t \redto (x \mapsto v, \lam y t')\\
 V \der u \redto v_\code{arg}\\
 V\push{y \mapsto v_\code{arg}}\push{x \mapsto v}
 \der t' \redto w_1\\
 w_1 \rewto{x}{v} w_2 \rewto {y}{v_\code{arg}} w_3}
{V \der t \app u \redto w_3}
\and
\inferrule
{V' \der t \redto (x \mapsto v', \lam y t')\\
 V' \der u \redto v'_\code{arg}\\
 V\push{y \mapsto v'_\code{arg}}\push{x \mapsto v'} \der t' \redto w'_1\\
 w'_1 \rewto{x}{v'} w'_2 \rewto{y}{v'_\code{arg}} w'_3}
{V \der t \app u \redto w'_3}
\end{mathpar}

From $\Phifun \subseteq \Phifun + \Phiclos + \phi.\Phibody$ we get by
induction that
$\Gamma \der (x \mapsto v, \lam y t') =_{\Phi_0} (x \mapsto v', \lam y t') : \tfun{\Gamma^\Phiclos}{y\of\sigma^\phi}{\tau}$.
This equivalence gives us the following premises: \[
  \inferrule
     {\Gamma, x \of \rho \der v : \tau_i\\
      \Gamma^\Phi, y \of \sigma^\phi, x \of \rho^\psi \der t : \tau\\
      \Gamma^\Phi, y \of \sigma^\phi, x \of \rho^\psi \der \tau
      \rewto{x}{\Psi} \Gamma^\Phip, x \of \sigma^\phi \der \tau'\\
      \Psi \subseteq \Phi_0 \implies \Gamma \der v =_{\Phi_0} v' : \rho}
     {\Gamma \der
       (x \mapsto v, \lam y t')
       =_{\Phi_0}
       (x \mapsto v', \lam y t')
       : \tfun{\Gamma^\Phiclos}{y\of\sigma^\phi}{\tau}}
\]

If $\Psi \subseteq \Phi_0$ then $v =_{\Phi_0} v'$,
and similarly we get $v_{\code{arg}} = v'_{\code{arg}}$ only in the
case where $\Phiarg \subseteq \dots$, that is $\phi$
(the argument dependency) is $1$. In any case, $w_1$ and $w'_1$ are
$\Phi_0$-equivalent by induction, and by construction this is
preserved by the substitutions $\rewto{x}{v}$ and
$\rewto{y}{v_{\code{arg}}}$.
  \end{caselist}
\qed
\end{proof}
\end{version}

\section{Prototype Implementation}
\label{sec:impl}

To experiment with our type system, we implemented a software
prototype in OCaml. At around one thousand lines, the implementation
mainly contains two parts.
\begin{enumerate}
\item For each judgement in this paper, a definition of corresponding
  set of inference rules along with functions for building and
  checking derivations.
\item A (rudimentary) command-line interface that is based on a lexer,
  a parser, and a pretty-printer for the expressions, types, judgments
  and derivations of our system.
\end{enumerate}
For the scope checking judgments for context and types, the
implementation \emph{checks} well-scoping of the given contexts and
types. It either builds a derivation using the well-scoping rules or
fails to do so because of ill-scoped input. 

For the typing judgment, the implementation performs some
\emph{inference}.  Given a type context $\Gamma$ and an expression
$e$, it returns $\Phi$, $\sigma$, and a derivation $\Gamma^\Phi \der e
: \sigma$ if such a derivation exists.  Otherwise it fails. The
substitution and reduction judgments are deterministic and
computational in nature.  Our implementation takes the left-hand side
a judgement (with additional parameters) and \emph{computes} the
right-hand-side of the judgment along with a derivation.

Below is an example of interaction with the prototype interface:
{\small
\lstset{xleftmargin=8.5pt}
\begin{lstlisting}[]
% make
% ./closures.byte -str "let y = (y1, y2) in (y, \(x:\sigma) z)"
Parsed expression: let y = (y1, y2) in (y, $\lambda$(x:$\sigma$) z)

The variables (y1, y2, z) were unbound; we add them to the default
environment with dummy types (ty_y1, ty_y2, ty_z) and values
(val_y1, val_y2, val_z).

Inferred typing:
  y1:ty_y1$^1$,y2:ty_y2$^1$,z:ty_z$^0$ $\vdash$
    let y = (y1, y2) in (y, $\lambda$(x:$\sigma$) z)
    : ((ty_y1 * ty_y2) * [y1:ty_y1$^0$,y2:ty_y2$^0$,z:ty_z$^1$](x:$\sigma^0$) $\rightarrow$ ty_z)

Result value:
    $\!$((val_y1, val_y2), ([y1,y2,z], ((y $\mapsto \!$ (val_y1, val_y2))), $\lambda$(x) z))
\end{lstlisting}
}
In this example, adapted from the starting example of the article,
$y \of \sigma^1, z \of \tau^0 \der (y, \lam {x} z)$, one can observe
that the value $z$ is marked as non-needed by the global value
judgment, but needed in the type of the closure $\lam{x} z$. Besides,
the computed value closure has captured the local variable $y$, but
still references the variables $y1, y2$, and $z$ of the outer context.

The prototype can also produce ASCII rendering of the typing and
reduction derivations, when passed \texttt{--typing-derivation} or
\texttt{--reduction-derivation}. This can be useful in
particular in the case of typing or reduction errors, as a way to
locate the erroneous sub-derivation.

The complete source code of the prototype is available at the following URL:\\
  \url{http://gallium.inria.fr/~scherer/research/open_closure_types}

\section{Discussion}

Before we conclude, we highlight three technical points that deserve a
more in-depth discussion and that are helpful link our work to
existing and future work.

% \only{\Short}{\vspace{-2ex}}
\subsubsection{Typed Closure Conversion.}

It is interesting to relate our open closure types and
typed closure conversion of Minamide et
al.~\cite{MinamideMH96}. In the classical semantics, a $\lambda$-term
$\Gamma \der \lam x e : \sigma_1 \to \sigma_2$ evaluates under the
value binding $W$ to a pair $(W, \lam x e)$, which can be
given the type $(\Gamma * (\Gamma \to \sigma_1 \to \sigma_2))$
(writing $\Gamma$ for the product of all types in the context). To
combine closures of the same observable type that capture
different environments, one needs to abstract away the
environment type by using the existential type
$\exists \rho. (\rho * (\rho \to \sigma \to \tau))$.

In our specific semantics, a closure that was originally defined in
the environment $\Gamma, \Delta$ but is then seen in the environment
$\Gamma$, only captures the values of variables in $\Delta$. Typed
closure conversion is still possible, but we would need to give it the
less abstract type
$\forall \Gamma. \exists \rho (\rho * (\Gamma \to \rho \to \sigma_1 \to \sigma_2))$.
This reflects how our open closure types allow closure types to
contain static information about variables of the current lexical
context, while still allowing free composition of closures that were
initially defined in distinct environments. Our closure types evolve
from a very open type, at the closure construction point, into the
usual ``closure conversion'' type that is completely abstract in
captured values, in the empty environment.

% \only{\Short}{\vspace{-2ex}}
\subsubsection{Subtyping and Conservativity.}

As mentioned, our type system is \emph{not} conservative over the
simply-typed lambda-calculus because of the restriction on
substitution of function types (domain types must be preserved
by substitution). This is not a surprise as our types provide more
fine-grained information without giving a way to forget some of this
more precise information. Regaining conservativity is very simple.
One needs a notion
of subtyping allowing to hide variables present in closure types
(eg., $\tfun{\Gamma,\Delta}{x:\sigma_1}{\sigma_2} \leq \tfun{\Gamma}{x:\sigma_1}{\sigma_2}$
whenever $\sigma_1, \sigma_2$ are well-scoped under
$\Gamma$ alone). Systematically coercing all functions into closures
capturing the empty environment then gives us exactly the simply-typed
lambda-calculus.

% \only{\Short}{\vspace{-2ex}}
\subsubsection{Polymorphism.}

We feel the two previous points could easily be formally integrated in
our work. A more important difference between our prototypical system and a
realistic framework for program analysis is the lack of polymorphism.
This could require significantly more work and is left for future
work. We conjecture that adding abstraction on type variables (and
their annotation $\phi$) is direct, but a more interesting question is
the abstraction over annotated contexts $\Gamma^\Phi$. For example, we
could want to write the following, where $\kappa$ is a formal context
variable:
\[ \der \lam f \lam x \lam y {f y x} :
  \forall \kappa \alpha \beta \gamma.
    (\tfun{\kappa}{x \of \alpha}{\tfun{\kappa}{y \of \beta}{\gamma}})
    \! \to
    (\tfun{\kappa}{y \of \beta}{\tfun{\kappa}{x \of \alpha}{\gamma}}) \]
Polymorphism seems to allow greater flexibility in the analysis of
functions taking functions as parameters.  This use of
polymorphism is related to the ``resource polymorphism'' of
\cite{Jost10}, which serves the same purpose of leaving freedom to
input functions.  Open closure
types on the other hand, are motivated by expressions that \emph{return} function
closures; the flip side of the higher-order coin. 

\section{Conclusion}
\label{sec:concl}
% \only{\Short}{\vspace{-1ex}}

We have introduced open closure types and their type theory.  The
technical novelty of the type system is the ability to track
intensional properties of function application in function closures
types. To maintain this information,we have to update function types
when they escape to a smaller context. This update is performed by
a novel non-trivial substitution operation. We have proved the
soundness of this substitution and the type theory for a simply-typed
lambda calculus with pairs\Fixpoint{, let bindings and fixpoints}{ and
  let bindings}.

To demonstrate how our open closure types can be used in program
verification we have applied this technique to track data-flow
information and to ensure non-interference in the sense of
information-flow theory. We envision open closure types to be applied
in the context of type systems for strong intensional properties of
higher-order programs, and this simple system to serve as a guideline
for more advanced applications.

We already have preliminary results from an application of open
closure types in amortized resource
analysis~\cite{Jost10,HoffmannAH12}. Using them, we were for the first
time able to express a linear resource bound for the curried append
function (see Section~\ref{sec:intro}).

% \only{\Short}{\vspace{-2ex}}
\subsubsection*{Acknowledgments.} This research is based on work
supported in part by DARPA CRASH grant FA8750-10-2-0254 and NSF grant
CCF-1319671.  Any opinions, findings, and conclusions contained in
this document are those of the authors and do not reflect the views of
these agencies.

\bibliographystyle{splncs03}
{
\bibliography{lit}
}

% \appendix
\end{document}